\providecommand{\@LN}[2]{}
\newcommand{\pkg}[1]{$\mathsf{ #1}$}
\newcommand{\bB}{{\bf B}}
\newcommand{\bx}{{\bf x}}
\newcommand{\bX}{{\bf X}}
\newcommand{\br}{{\bf r}}
\newcommand{\bff}{{\bf f}}
\newcommand{\bG}{\boldsymbol{G}}
\newcommand{\bY}{{\bf Y}}
\newcommand{\by}{{\bf y}}
\newcommand{\bz}{{\bf z}}
\newcommand{\bb}{{\bf b}}
\newcommand{\bF}{{\bf F}}
\newcommand{\bQ}{{\bf Q}}
\newcommand{\bD}{{\bf D}}
\newcommand{\bI}{{\bf I}}
\newcommand{\bM}{{\bf M}}
\newcommand{\bw}{{\bf w}}
\newcommand{\bg}{{\bf g}}
\newcommand{\bW}{{\bf W}}
\newcommand{\bu}{{\bf u}}
\newcommand{\bE}{{\bf E}}
\newcommand{\bzero}{{\bf 0}}
\DeclareMathOperator{\Tr}{Tr}
\def\independenT#1#2{\mathrel{\rlap{$#1#2$}\mkern2mu{#1#2}}}
\newcommand{\bbeta}{\boldsymbol{\beta}}
\newcommand{\bmu}{\boldsymbol{\mu}}
\newcommand{\bSigma}{\boldsymbol{\Sigma}}
\newcommand{\btau}{\boldsymbol{\tau}}
\newcommand{\brho}{\boldsymbol{\rho}}
\newcommand{\hB}{\hat{B}}
\newcommand{\hbB}{\widehat{\bB}}
\newcommand{\beps}{\boldsymbol{\epsilon}}
\newcommand{\eps}{\epsilon}
\newcommand{\hbeps}{\hat{\boldsymbol{\epsilon}}}
\newcommand{\bOmega}{\boldsymbol{\Omega}}
\newcommand{\real}{I\kern-0.37emR}
\newtheorem{example}{Example}[section]
\newtheorem{theorem}{Theorem}
\newtheorem{definition}{Definition}
\newtheorem{lemma}{Lemma}
\newtheorem{corollary}{Corollary}
\newtheorem{remark}{Remark}
\newtheorem{proposition}{Proposition}
\newtheorem{condition}{Condition}
\newcommand{\convinprob}{\stackrel{P}{\rightarrow}}
\newcommand{\convindist}{\stackrel{\mathcal{D}}{\rightarrow}}
\newcommand\independent{\protect\mathpalette{\protect\independenT}{\perp}}
\def\independenT#1#2{\mathrel{\rlap{$#1#2$}\mkern2mu{#1#2}}}
\newcommand{\bc}{{\bf c}}
\title{\bf A Projection-based Conditional Dependence Measure with Applications to High-dimensional Undirected Graphical Models \thanks{Corresponding Authors: Yang Feng, Department of Statistics, Columbia University, New York, NY 10027, E-mail: yang.feng@columbia.edu, Phone: 212-851-2139; Lucy Xia, Department of Statistics, Stanford University, Stanford, CA 94305, E-mail: lucyxia@stanford.edu, Phone: 650-497-8146.}}
\date{}
\author{Jianqing Fan$^1$, Yang Feng$^2$ and Lucy Xia$^3$\\
{\it $^1$Princeton University, $^2$Columbia University and $^3$Stanford University}}
\begin{document}
\maketitle

\begin{abstract}
Measuring conditional dependence is an important topic in econometrics with broad applications including graphical models. Under a factor model setting, a new conditional dependence measure based on projection is proposed. The corresponding conditional independence test is developed with the asymptotic null distribution unveiled where the number of factors could be high-dimensional. It is also shown that the new  test has control over the asymptotic type I error and can be calculated efficiently. A generic method for building dependency graphs without Gaussian assumption using the new test is elaborated. Numerical results and real data analysis show the superiority of the new method.\end{abstract}
{\bf JEL classification:}  C13; C14\\
{\bf Key Words:} conditional dependence; distance covariance; factor model; graphical model; projection.

\section{Introduction}

Rapid development in technology continuously floods us with high-dimensional data nowadays. One interesting question we want to answer is their conditional dependency structure. Mathematically, let $\bz=(z^{(1)},\dots,z^{(d)})$ be a $d$-dimensional random vector, which represents our data. We would like to know whether $z^{(i)}\independent z^{(j)}|\bz\setminus \{z^{(i)},z^{(j)}\}$ for $i\neq j\in\{1,\dots, d\}$ or if there are other factors $\bff$ associated with $\bz$, then whether $z^{(i)}\independent z^{(j)}|\bff$. For visualization purpose, sometimes people use a graph to represent such a structure; in other words, suppose each coordinate in the multivariate data is a vertex, then an edge will be drawn if conditioning on some factors we choose, the two vertices are dependent. Producing such a graph can help us understand data and has been an important topic in fields including finance, signal processing, bioinformatics and network modeling \citep{wainwright2008graphical}.

Intrinsically, if we look at the nodes pair by pair, this is a testing problem. In general, we can write our goal as testing whether $\bx$ and $\by$ are independent given $\bff$, i.e.,
\begin{align}\label{eq::indep-test-ori}
  H_0:  \bx\independent\by|\bff ,
\end{align}
where $\bx$, $\by$ and $\bff$ are random vectors with possibly different dimensions. For such conditional independence tests, there has been abundant literature, especially in economics.  \cite{linton1997conditional} proposed two nonparametric tests based on a generalization of the empirical distribution function; however, a complicated bootstrap procedure is needed to calculate critical values of the test, which leads to limited practical value. \cite{su2007consistent, su2008nonparametric, su2014testing} and \cite{wang2015conditional} proposed conditional independence tests based on Hellinger distance, empirical likelihood, conditional moments and conditional characteristic function, respectively. However,  as many of the recently  available datasets are of high-dimension, the computation for these tests becomes prohibitive and strongly limits the practical value. Another related work is \cite{sen2014testing}, where the focus is on testing the independence between the error and the predictor variables in the linear regression problem. 

Our starting point is a relatively ``general" model on $\{\bx,\by,\bff\}$. In particular, suppose $\{(\bx_i, \by_i, \bff_i, \beps_{i,x}, \beps_{i,y}), i=1,\dots, n\}$ are i.i.d. realizations of $(\bx, \by, \bff, \beps_x, \beps_y)$, which are generated from the following model:
\begin{align}\label{eq::model-general}
\bx = \bG_x(\bff) +\beps_{x}, \quad \by=\bG_y(\bff)+\beps_{y},
\end{align}
where $\bff$ is the $K$-dimensional common factors, $\bG_x$ and $\bG_y$ are general mappings from $\mathbb{R}^K$ to $\mathbb{R}^p$ and $\mathbb{R}^q$, respectively. The observed data are $\{(\bx_i, \by_i, \bff_i), i=1,\ldots,n\}$. Here, for simplicity and tractability, we assume independence between $(\beps_{x}, \beps_{y})$ and $\bff$. Such kind of models shed light for another route to solve the issues, nonparametric regression. The idea is intuitive: to test \eqref{eq::indep-test-ori} is the same as testing $\beps_{x}\independent \beps_{y}$ under \eqref{eq::model-general}, which naturally leads to  a two-step procedure.  Since  $\{(\beps_{i,x}, \beps_{i,y}), i=1,\dots, n\}$ are not observed, in Step 1, we estimate the residuals. In this regard, we assume the dimensions $p$ and $q$ to be fixed while the number of factors $K$ could diverge to infinity. Step 2, we apply an independence test on the estimated residuals. These two steps constitute our new conditional independence test and we will unveil the asymptotic properties for this new test statistic. Let's briefly preview the procedure in the following two paragraphs.

In Step 1, ideally, a fully nonparametric projection on $\bff$ (e.g., local polynomial regression) would consistently recover the random errors asymptotically under certain smoothness assumptions on $\bG_x$ and $\bG_y$, when $K$ is fixed. However, it becomes challenging when $K$ diverges due to the curse of dimensionality if no structural assumptions are made on $\bG_x$ and $\bG_y$.   As a result, in this paper, we will study  two  cases where $\bG_x$ and $\bG_y$ are linear functions (factor models) in Section \ref{subsec::T} and where $\bG_x$ and $\bG_y$ are additive functions in Section \ref{subsec:: Functional Projection} when $K$ diverges. Further relaxed models might be available for future work, but we don't focus on them in this paper.

To complete our proposal, after estimating the residuals in Step 1, we still need to find a suitable measure of dependence between random variables/vectors in Step 2. In this regard, many different measures of dependence have  been proposed. Some of them rely heavily on Gaussian assumptions, such as Pearson correlation, which measures linear dependence and the uncorrelatedness is equivalent to independence only when the joint distribution is Gaussian; or Wilks Lambda \citep{wilks1935independence}, where normality is adopted to calculate the likelihood ratio. To deal with non-linear dependence and non-Gaussian distribution, statisticians have proposed rank-based correlation measures, including Spearman's $\rho$  and Kendall's $\tau$, which are more robust than Pearson correlation against deviations from normality. However, these correlation measures are usually only effective for monotonic types of dependence. In addition, under the null hypothesis that two variables are independent, no general statistical distribution of the coefficients associated with these measures has been derived. Other related works include \cite{hoeffding1948non},  \cite{blomqvist1950measure}, \cite{blum1961distribution}, and some methods described in \cite{hollander2013nonparametric} and \cite{anderson1958introduction}. Taking these into consideration, distance covariance \citep{szekely2007measuring} was introduced to address these deficiencies. The major benefits of distance covariance are: first, zero distance covariance implies independence, and hence it is a true dependence measure. Second, distance covariance can measure the dependence between any two vectors which potentially are of different dimensions. Recently, \cite{huo2016fast} proposed a fast computation method for distance covariance. Due to these advantages, we will focus on  distance covariance in this paper as our measure of dependence.

So far, we complete a rough description of the newly proposed conditional dependence measure; and we are able to build conditional dependency graphs by conducting this test edge by edge. We would like to make two remarks here to help readers connect the dots between our work and some other existing related topics/works.

First, let us look at the connection to undirected graphical models.  Undirected graphical models (UGM) has been a popular topic in econometrics in the past decade. It studies the {\it ``internal"} conditional dependency structure of a multivariate random vector.  To be more explicit, again let $\bz=(z^{(1)},\dots,z^{(d)})$ be the $d$-dimensional random vector of interest. We denote the undirected graph corresponding to $\bz$ by $(V,E)$, where vertices $V$ correspond to components of $\bz$ and edges $E=\{e_{ij},1\leq i\neq j\leq d\}$ indicate whether  node $z^{(i)}$ and $z^{(j)}$ are conditionally independent given the remaining nodes. In particular, the edge $e_{ij}$ is absent if and only if $z^{(i)}\independent z^{(j)}|\bz\setminus \{z^{(i)},z^{(j)}\}$. Therefore, UGM is a nature application of our measure if we take $\bff=\bz\setminus \{z^{(i)},z^{(j)}\}$ in our test. One intensively studied sub-field is GGM (Gaussian graphical model) where $\bz$ is assumed to follow a multivariate Gaussian distribution with mean $\bmu$ and covariance matrix $\bSigma$. This extra assumption is desirable since then the precision matrix $\bOmega=(w_{ij})_{d\times d}=\bSigma^{-1}$ captures exactly the conditional dependency graph; that is, $w_{ij}= 0$ if and only if $e_{ij}$ is absent \citep{lauritzen1996graphical, edwards2000introduction}. Therefore, under the Gaussian assumption, this problem reduces to the estimation of precision matrix, where a rich literature on model selection and parameter estimation can be found in both low-dimensional and high-dimensional settings, including \cite{dempster1972covariance}, \cite{drton2004model}, \cite{meinshausen2006high}, \cite{friedman2008sparse},  \cite{fan2009network},  \cite{cai2011constrained}, \cite{liu2013gaussian}, \cite{chen2014selection}, \cite{ren2015asymptotic}, \cite{jankova2015confidence} and \cite{yu2017learning}. With simple derivations, it's easy to check that GGM fits into our framework with $\bG$ being linear and $\boldsymbol{\epsilon}$ having Gaussian distributions. Therefore, with linear projection in Step 1 and distance covariance in Step 2, our proposed conditional measure solves GGM. It's worth noting that, indeed, in Step 2, choosing Pearson correlation will solve GGM as well; choosing distance covariance gives more flexibility since we don't assume normality on $\boldsymbol{\epsilon}$ and potentially we can solve non-Gaussian UGMs. Another interesting work is \cite{voorman2013graph}, where a semi-parametric method was introduced for graph estimation.

Second, we examine the link to  factor models. As explained in the last paragraph, UGM is a case with $\bff$ being internal factors, in other words, part of the interested vector $\bz$. Another scenario of our framework is the case when $\bff$ are external, and this is closely related to factor models. As an example, in the Fama-French three-factor model, the return of each stock can be considered as one node in the graph we want to build and $\bff$ are the chosen three-factors. This example will be further elaborated in Section \ref{sec::Realdata}. Therefore, the factors $\bff$ are considered as external since they are not part of the individual stock returns. Another interesting application is discussed in \cite{stock2002macroeconomic}, where external factors are aggregated macroeconomic variables, and the nodes are disaggregated macroeconomic variables.

With the above two remarks, we see our proposed test cover some of the existing topics as by-products. We summarize the main contribution of this paper here. First, under model \eqref{eq::model-general}, we propose a computationally efficient conditional independence test. Both the response vectors and the common factors can be of different dimensions and the number of the factors could grow to infinity with sample size.  Second, we apply this test to build conditional dependency graph (internal factors) and covariates-adjusted dependency graph (external factors).

The rest of this paper is organized as follows. In Section \ref{sec::Methods}, we  present our new procedure for testing conditional independence via projected distance covariance (P-DCov)  and describe how to construct conditional dependency graphs based on the proposed test. Section \ref{sec::TheoreticalP} gives theoretical properties including the asymptotic distribution of the test statistic under the null hypothesis as well as the type I error guarantee. Section \ref{sec::NumericalS} contains extensive numerical studies and Section \ref{sec::Realdata} demonstrates the performance of P-DCov via a financial data set. We conclude the paper with a short discussion in Section \ref{sec::Disc}.  Several technical lemmas and all proofs are relegated to the appendix.

\section{Methods}\label{sec::Methods}
\subsection{A brief review of distance covariance}
First, we introduce some notations. For a random vector $\bz$, $\|\bz\|$ and $\|\bz\|_1$ represent its Euclidean norm and $\ell_1$ norm, respectively. A collection of $n$ i.i.d. observations of $\bz$ is denoted as $\{\bz_1,\dots,\bz_n\}$, where $\bz_k=(z^{(1)}_k,\dots,z^{(d)}_k)^T$ represents the $k$-th observation. For any matrix $\bM$,  $\|\bM\|_F$, $\|\bM\|$ and  $\|\bM\|_{\max}$ denote its Frobenius norm, operator norm and max norm, respectively. $\|\bM\|_{a,b}$ is the $(a,b)$ norm defined as the $\ell_b$ norm of the vector consisting of column-wise $\ell_a$ norm of $\bM$. Furthermore, $a\wedge b$ represents $\min\{a , b\}$ and $a\vee b$ represents $\max\{a,  b\}$.

As an important tool, distance covariance is briefly reviewed in this section with further details available in \cite{szekely2007measuring}. We introduce several definitions as follows.
\begin{definition}($w$-weighted $L_2$ norm)
Let $c_d=\frac{\pi^{(d+1)/2}}{\Gamma((d+1)/2)}$, for any positive integer $d$, where $\Gamma$ is the Gamma function. Then for function $\gamma$ defined on $\mathbb{R}^p\times\mathbb{R}^q$, the $w$-weighted $L_2$ norm of $\gamma$ is defined by
\[
\|\gamma(\btau,\brho)\|^2_w=\int_{\mathbb{R}^{p+q}}|\gamma(\btau,\brho)|^2 w(\btau,\brho) d\btau d\brho,\ \ \text{where}\ \ w(\btau,\brho)=(c_p c_q \|\btau\|^{1+p} \|\brho\|^{1+q})^{-1}.
\]
\end{definition}

\begin{definition} (Distance covariance) The distance covariance between random vectors $\bx\in\mathbb{R}^p$ and $\by\in\mathbb{R}^q$ with finite first moments is the nonnegative number  $\mathcal{V}(\bx,\by)$ defined by
\[
\mathcal{V}^2(\bx,\by)=\|g_{\bx,\by}(\btau,\brho)-g_{\bx}(\btau)g_{\by}(\brho)\|^2_w,
\]
where $g_{\bx}$, $g_{\by}$ and $g_{\bx,\by}$ represent the characteristic functions of $\bx$, $\by$ and the joint characteristic function of $\bx$ and $\by$, respectively.
\end{definition}

Suppose we observe random sample $\{(\bx_k,\by_k):k=1,\dots,n\}$ from the joint distribution of $(\bx,\by)$. We denote $\bX = (\bx_1,\bx_2,\dots,\bx_n)$ and $\bY = (\by_1,\by_2,\dots, \by_n)$.

\begin{definition} (Empirical distance covariance) The empirical distance covariance between  samples $\bX$ and $\bY$
  is the nonnegative random variable $\mathcal{V}_n(\bX,\bY)$  defined by
\[
\mathcal{V}^2_n(\bX,\bY)=S_1(\bX,\bY)+S_2(\bX,\bY)-2S_3(\bX,\bY),
\]
where
\begin{align*}
S_1(\bX,\bY)&=\frac{1}{n^2}\sum_{k,l=1}^n \|\bx_k-\bx_l\|\|\by_k-\by_l\|,\ S_2(\bX,\bY)=\frac{1}{n^2}\sum_{k,l=1}^n \|\bx_k-\bx_l\|\frac{1}{n^2}\sum_{k,l=1}^n\|\by_k-\by_l\|,
\cr
S_3(\bX,\bY)&=\frac{1}{n^3}\sum_{k=1}^n\sum_{l,m=1}^n \|\bx_k-\bx_l\|\|\by_k-\by_m\|.
\end{align*}

\end{definition}

With above definitions, Lemma \ref{thm::old-consistency} depicts the consistency of $\mathcal{V}_n(\bX,\bY)$ as an estimator of $\mathcal{V}(\bx, \by)$. Lemma \ref{thm::old-asymptotic} shows the asymptotic distribution of $\mathcal{V}_n(\bX,\bY)$ under the null hypothesis that $\bx$ and $\by$ are independent. Corollary \ref{coro::old-teststat} reveals properties of the test statistic $n\mathcal{V}^2_n/S_2$ proposed in \cite{szekely2007measuring}.
\begin{lemma}\label{thm::old-consistency}(Theorem 2 in \cite{szekely2007measuring})
Assume that $\mathbb{E}(\|\bx\|+\|\by\|)<\infty$, then almost surely
\[
\lim_{n\rightarrow \infty} \mathcal{V}_n(\bX,\bY)=\mathcal{V}(\bx,\by).
\]
\end{lemma}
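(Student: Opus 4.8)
The plan is to argue directly through the weighted $L_2$ representation that defines both the sample and population quantities, because the hypothesis only controls first moments. Introduce the empirical characteristic functions $g^n_x(\btau)=\frac1n\sum_{k=1}^n e^{i\langle\btau,\bx_k\rangle}$, $g^n_y$, $g^n_{x,y}$, and set $\zeta_n=g^n_{x,y}(\btau,\brho)-g^n_x(\btau)g^n_y(\brho)$, the empirical analogue of $\zeta=g_{x,y}-g_xg_y$. A direct expansion using the identity $\int_{\mathbb{R}^p}\frac{1-\cos\langle\btau,\bx\rangle}{c_p\|\btau\|^{1+p}}\,d\btau=\|\bx\|$ shows that $\mathcal{V}^2_n(\bX,\bY)=\|\zeta_n\|_w^2$, so the claim reduces to $\|\zeta_n\|_w^2\to\|\zeta\|_w^2$ almost surely (the square root being continuous). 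I would deliberately avoid applying the strong law for $V$-statistics to $S_1,S_2,S_3$ individually: under only first moments the limits $\mathbb{E}\|\bx-\bx'\|\|\by-\by'\|$ and $\mathbb{E}\|\bx-\bx'\|\|\by-\by''\|$ may each be infinite when $\bx,\by$ are strongly dependent with heavy tails, so only their combination is finite.

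First I would establish pointwise almost-sure convergence $\zeta_n(\btau,\brho)\to\zeta(\btau,\brho)$ at each fixed $(\btau,\brho)$: every empirical characteristic function is an average of i.i.d. unit-modulus variables, so the ordinary strong law applies and continuity of products transfers it to $\zeta_n$. By Fubini this convergence then holds simultaneously for almost every $(\btau,\brho)$ on a single probability-one event. Next I would split $\mathbb{R}^{p+q}$ into the annulus $D_\delta=\{\delta\le\|\btau\|\le1/\delta,\ \delta\le\|\brho\|\le1/\delta\}$ and its complement. On $D_\delta$ the weight $w$ is bounded and supported on a bounded set, $4w\,\mathbf{1}_{D_\delta}$ dominates $|\zeta_n|^2 w$, and dominated convergence gives $\int_{D_\delta}|\zeta_n|^2w\to\int_{D_\delta}|\zeta|^2w$ almost surely.

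The crux is the uniform (in large $n$) smallness of the tail $\int_{D_\delta^c}|\zeta_n|^2w$ as $\delta\to0$, and this is the step I expect to be the main obstacle. For the population integrand, $\int_{D_\delta^c}|\zeta|^2w$ is merely the tail of the convergent integral $\mathcal{V}^2(\bx,\by)<\infty$ and vanishes as $\delta\to0$. For the empirical integrand I have two elementary bounds, $|\zeta_n|\le2$ and, from $|e^{i\langle\btau,\bx_k\rangle}-g^n_x(\btau)|\le\|\btau\|(\|\bx_k\|+\tfrac1n\sum_l\|\bx_l\|)$ applied in the empirical-covariance form $\zeta_n=\frac1n\sum_k(e^{i\langle\btau,\bx_k\rangle}-g^n_x)(e^{i\langle\brho,\by_k\rangle}-g^n_y)$, a ``linear'' bound of order $\|\btau\|\,\|\brho\|$. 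The difficulty is that the naive linear bound carries the empirical cross moment $\frac1n\sum_k\|\bx_k\|\,\|\by_k\|$, which is \emph{not} controlled by first moments, so it cannot be integrated directly against the singular weight. The resolution is to interpolate the two bounds region by region (using $|\zeta_n|\le2$ where $\|\btau\|$ or $\|\brho\|$ is large and the linear bound only near the coordinate subspaces $\{\btau=\bzero\}$, $\{\brho=\bzero\}$), arranging the split so that only the empirical first moments $\frac1n\sum_k\|\bx_k\|$ and $\frac1n\sum_k\|\by_k\|$ appear. This is exactly where $\mathbb{E}(\|\bx\|+\|\by\|)<\infty$ is used: by the strong law these averages converge almost surely, hence are almost surely bounded, making the tail bound uniform in $n$ and small as $\delta\to0$.

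Combining the three pieces---letting $n\to\infty$ first on $D_\delta$, then $\delta\to0$---yields $\mathcal{V}^2_n(\bX,\bY)\to\mathcal{V}^2(\bx,\by)$ almost surely, and continuity of the square root finishes the proof. I note that under the stronger assumption $\mathbb{E}(\|\bx\|^2+\|\by\|^2)<\infty$ the argument collapses to a single application of the strong law for $V$-statistics, since each of $S_1,S_2,S_3$ is then a $V$-statistic with integrable kernel and converges termwise; the work above is precisely what is needed to reach the sharp first-moment hypothesis.
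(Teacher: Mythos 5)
First, a point of reference: the paper does not prove this lemma at all --- it is imported verbatim as Theorem~2 of \cite{szekely2007measuring} --- so your attempt has to be measured against the proof in that source. Your overall architecture is in fact the same as theirs: the representation $\mathcal{V}^2_n(\bX,\bY)=\|\zeta_n\|^2_w$ with $\zeta_n=g^n_{x,y}-g^n_xg^n_y$, almost-sure pointwise convergence of the empirical characteristic functions, dominated convergence on the annulus $D_\delta$, and a first-moment estimate for the tail $\int_{D_\delta^c}|\zeta_n|^2w$. Your observation that one cannot simply apply the V-statistic strong law to $S_1,S_2,S_3$ separately under first moments is also correct and is precisely why the original argument goes through characteristic functions.

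However, there is a genuine gap at exactly the step you flag as the crux, and your proposed resolution provably fails there. The only pointwise bounds your interpolation scheme allows are $|\zeta_n|\le 2$, $|\zeta_n|\le c\|\btau\|\bar A_n$ and $|\zeta_n|\le c\|\brho\|\bar B_n$, where $\bar A_n,\bar B_n$ are empirical first moments. No region-by-region combination of these produces an integrable majorant of $|\zeta_n|^2w$ near the joint corner $\{\|\btau\|<\delta,\|\brho\|<\delta\}$: after normalizing $\bar A_n=\bar B_n=1$, the best available bound is $\min\bigl(4,\,c\|\btau\|^2,\,c\|\brho\|^2\bigr)\le c\min(\|\btau\|,\|\brho\|)^2$, and on the set $\{\|\btau\|\le\|\brho\|\}$ one computes $\int_{\|\btau\|\le\|\brho\|}\|\btau\|^{1-p}\,d\btau\propto\|\brho\|$, so the remaining integral is proportional to $\int_0^\delta v^{-1}\,dv=\infty$; H\"older interpolation $\|\btau\|^{2\theta}\|\brho\|^{2(1-\theta)}$ fares no better, since integrability forces both $\theta>1/2$ and $\theta<1/2$. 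The missing idea --- the one actually used in the cited proof --- is to keep the per-sample product structure before taking absolute values and apply Cauchy--Schwarz over the sample index: writing $u_k(\btau)=e^{i\langle\btau,\bx_k\rangle}-g^n_x(\btau)$ and $v_k(\brho)=e^{i\langle\brho,\by_k\rangle}-g^n_y(\brho)$, one has
\begin{equation*}
|\zeta_n(\btau,\brho)|^2=\Bigl|\tfrac1n\sum_{k=1}^n u_k(\btau)v_k(\brho)\Bigr|^2\le\bigl(1-|g^n_x(\btau)|^2\bigr)\bigl(1-|g^n_y(\brho)|^2\bigr),
\end{equation*}
which factors the troublesome double integral into a product of marginal integrals. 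By the $1-\cos$ identity, $\int_{\mathbb{R}^p}\bigl(1-|g^n_x(\btau)|^2\bigr)\bigl(c_p\|\btau\|^{1+p}\bigr)^{-1}d\btau=n^{-2}\sum_{k,l}\|\bx_k-\bx_l\|$, a V-statistic whose kernel needs only marginal first moments; restricting to $\{\|\btau\|<\delta\}$ gives smallness via a truncation of the pairwise distances, and $\{\|\btau\|>1/\delta\}$ is trivial since $\int_{\|\btau\|>1/\delta}2w\,d\btau=O(\delta)$. With this single substitution (Cauchy--Schwarz in place of your interpolation), your outline becomes the proof of Theorem~2 in \cite{szekely2007measuring}; without it, the tail estimate does not close.
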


\begin{lemma}\label{thm::old-asymptotic}(Theorem 5 in \cite{szekely2007measuring})
Assume that $\bx$ and $\by$ are independent, and $\mathbb{E}(\|\bx\|+\|\by\|)<\infty$, then as $n\rightarrow \infty$,
\[
n\mathcal{V}^2_n(\bX,\bY)\stackrel{D}{\rightarrow}\|\zeta(\btau,\brho)\|^2_w,
\]
where $\stackrel{D}{\rightarrow}$ represents convergence in distribution and $\zeta(\cdot , \cdot )$ denotes a complex-valued centered Gaussian random process with covariance function
\[
R(\bu,\bu_0)=(g_{x}(\btau-\btau_0)-g_{x}(\btau)\overline{g_{x}(\btau_0)})(g_{y}(\brho-\brho_0)-g_{y}(\brho)\overline{g_{y}(\brho_0)}),
\]
in which $\bu=(\btau,\brho)$, $\bu_0=(\btau_0,\brho_0)$.
\end{lemma}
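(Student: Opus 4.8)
The plan is to recast $n\mathcal{V}^2_n$ as the squared $w$-weighted $L_2$ norm of a centered empirical process and then read off the limit from weak convergence of that process in a Hilbert space. Introduce the empirical characteristic functions
\[
g_x^n(\btau)=\frac1n\sum_{k=1}^n e^{\mathrm{i}\langle\btau,\bx_k\rangle},\quad g_y^n(\brho)=\frac1n\sum_{k=1}^n e^{\mathrm{i}\langle\brho,\by_k\rangle},\quad g_{x,y}^n(\btau,\brho)=\frac1n\sum_{k=1}^n e^{\mathrm{i}\langle\btau,\bx_k\rangle+\mathrm{i}\langle\brho,\by_k\rangle},
\]
and the process $\zeta_n(\btau,\brho)=\sqrt{n}\,\bigl(g_{x,y}^n(\btau,\brho)-g_x^n(\btau)g_y^n(\brho)\bigr)$. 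First I would verify, via the elementary identity $\int_{\mathbb{R}^d}(1-\cos\langle\btau,\bz\rangle)(c_d\|\btau\|^{d+1})^{-1}d\btau=\|\bz\|$, that $\mathcal{V}^2_n=\|g_{x,y}^n-g_x^ng_y^n\|_w^2$, so that $n\mathcal{V}^2_n=\|\zeta_n\|_w^2$. Since the squared-norm functional $\gamma\mapsto\|\gamma\|_w^2$ is continuous on the separable Hilbert space $L^2(\mathbb{R}^{p+q},w)$, it then suffices to prove $\zeta_n\convindist\zeta$ in that space and invoke the continuous mapping theorem.

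Next I would establish convergence of the finite-dimensional distributions. Under $H_0$ we have $g_{x,y}=g_xg_y$, and the decomposition
\[
g_{x,y}^n-g_x^ng_y^n=\bigl(g_{x,y}^n-g_xg_y\bigr)-g_y\bigl(g_x^n-g_x\bigr)-g_x\bigl(g_y^n-g_y\bigr)-\bigl(g_x^n-g_x\bigr)\bigl(g_y^n-g_y\bigr)
\]
shows that, after multiplication by $\sqrt{n}$, the first three terms combine into a normalized sum of i.i.d. summands that collapses to the product form, while the last term is $O_P(n^{-1/2})$ and negligible. Writing $\bu=(\btau,\brho)$, one finds $\zeta_n(\bu)=n^{-1/2}\sum_{k=1}^n h_{\bu}(\bx_k,\by_k)+o_P(1)$ with $h_{\bu}(\bx,\by)=\bigl(e^{\mathrm{i}\langle\btau,\bx\rangle}-g_x(\btau)\bigr)\bigl(e^{\mathrm{i}\langle\brho,\by\rangle}-g_y(\brho)\bigr)$, which is mean zero by independence of $\bx$ and $\by$. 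The multivariate central limit theorem applied to real and imaginary parts then gives joint complex Gaussian limits at any finite set of points, and a direct computation yields $\mathbb{E}[h_{\bu}\overline{h_{\bu_0}}]=\bigl(g_x(\btau-\btau_0)-g_x(\btau)\overline{g_x(\btau_0)}\bigr)\bigl(g_y(\brho-\brho_0)-g_y(\brho)\overline{g_y(\brho_0)}\bigr)=R(\bu,\bu_0)$, identifying the limit as the claimed Gaussian process $\zeta$.

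The main obstacle is upgrading finite-dimensional convergence to tightness of $\zeta_n$ in $L^2(w)$, which is delicate because the weight $w(\btau,\brho)=(c_pc_q\|\btau\|^{1+p}\|\brho\|^{1+q})^{-1}$ is non-integrable both near the origin and at infinity; the process is square-integrable against $w$ only because its numerator vanishes at the singularities. I would handle this by truncation: fix $0<\delta<1$ and restrict the integral to the compact region $D_\delta=\{\delta\le\|\btau\|\le 1/\delta,\ \delta\le\|\brho\|\le 1/\delta\}$, where $w$ is bounded. On $D_\delta$ tightness follows from moment bounds on the increments of $\zeta_n$, controlled through the Lipschitz estimate $|e^{\mathrm{i}\langle\btau,\bz\rangle}-e^{\mathrm{i}\langle\btau_0,\bz\rangle}|\le\|\btau-\btau_0\|\,\|\bz\|$ together with $\mathbb{E}(\|\bx\|+\|\by\|)<\infty$.

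It then remains to show that the contribution of the complement $D_\delta^c$ is uniformly negligible. Using $|e^{\mathrm{i}\langle\btau,\bz\rangle}-1|\le\min(2,\|\btau\|\,\|\bz\|)$ to cancel the singularity as $\btau\to 0$ or $\brho\to 0$, and the finite first moments to tame the behavior as $\|\btau\|$ or $\|\brho\|\to\infty$, one bounds $\mathbb{E}\int_{D_\delta^c}|\zeta_n(\btau,\brho)|^2\,w(\btau,\brho)\,d\btau\,d\brho$ by a quantity that tends to $0$ as $\delta\to 0$, uniformly in $n$. Combining tightness on $D_\delta$ with this uniform tail control gives $\zeta_n\convindist\zeta$ in $L^2(w)$, and the continuous mapping theorem applied to $\|\cdot\|_w^2$ completes the proof. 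I expect this uniform near-origin/at-infinity truncation estimate, rather than the finite-dimensional CLT, to be the crux of the argument.
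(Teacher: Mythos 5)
This lemma is nowhere proved in the paper: it is imported verbatim as Theorem 5 of \cite{szekely2007measuring}, so the only proof to compare against is the original one in that reference. Your proposal reconstructs essentially that argument — the representation $n\mathcal{V}^2_n=\|\zeta_n\|^2_w$ with $\zeta_n=\sqrt{n}\,(g^n_{x,y}-g^n_x g^n_y)$, the finite-dimensional CLT with the covariance computation $\mathbb{E}[h_{\bu}\overline{h_{\bu_0}}]=R(\bu,\bu_0)$, and the truncation to a region where the weight is bounded plus a uniform bound on the tail integral are exactly the ingredients of the proof in \cite{szekely2007measuring}. Your algebraic steps (the decomposition of $g^n_{x,y}-g^n_xg^n_y$, the negligibility of $\sqrt{n}(g^n_x-g_x)(g^n_y-g_y)$, the covariance identity) are all correct.

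The one step that does not hold as stated is the one you treat as routine: tightness on $D_\delta$. With only $\mathbb{E}(\|\bx\|+\|\by\|)<\infty$, the estimate $|e^{\mathrm{i}\langle\btau,\bz\rangle}-e^{\mathrm{i}\langle\btau_0,\bz\rangle}|\le\min(2,\|\btau-\btau_0\|\,\|\bz\|)$ only yields $\mathbb{E}|\zeta_n(\bu)-\zeta_n(\bu_0)|^2\le C\|\bu-\bu_0\|$ (you cannot square the linear bound inside the expectation without second moments, only use $\min(2,a)^2\le 2a$), and higher powers of the bounded increments give the same exponent $1$. Kolmogorov-type tightness criteria on an index set of dimension $p+q\ge 2$ require an exponent exceeding $p+q$, so tightness of $\zeta_n$ in $C(D_\delta)$ does not follow from this bound; making it rigorous requires the chaining/entropy arguments developed for empirical characteristic function processes (Cs\"org\H{o}, Marcus), which is genuinely delicate under first moments. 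The cleaner repair — and the one that also renders your truncation unnecessary — is to avoid process-level tightness entirely: in the separable Hilbert space $L^2(w)$, weak convergence $\zeta_n\convindist\zeta$ follows from convergence of finite-dimensional distributions together with convergence of expected squared norms, $\int\mathbb{E}|\zeta_n|^2\,w\rightarrow\int\mathbb{E}|\zeta|^2\,w$ (a Cremers--Kadelka-type criterion). The latter is a direct computation: writing $\zeta_n=n^{-1/2}\sum_k h_{\bu}(\bx_k,\by_k)-\sqrt{n}(g^n_x-g_x)(g^n_y-g_y)$ gives $\mathbb{E}|\zeta_n(\bu)|^2=(1+O(n^{-1}))\,(1-|g_x(\btau)|^2)(1-|g_y(\brho)|^2)$, which is dominated uniformly in $n$ by a $w$-integrable function whose integral equals $\mathbb{E}\|\bx-\bx'\|\,\mathbb{E}\|\by-\by'\|$ (independent copies $\bx',\by'$) by the same integral identity you quote, so dominated convergence applies. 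With that substitution, or with an explicit citation for tightness of empirical characteristic processes on compacta, your outline is complete and matches the original proof.
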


\begin{corollary}\label{coro::old-teststat}(Corollary 2 in \cite{szekely2007measuring})
Assume that $\mathbb{E}(\|\bx\|+\|\by\|)<\infty$.
\begin{enumerate}
\item If $\bx$ and $\by$ are independent, then as $n\rightarrow \infty$, $n\mathcal{V}^2_n(\bX,\bY)/S_2(\bX,\bY)\stackrel{D}{\rightarrow}Q$ with $Q\overset{D}{=}\sum_{j=1}^{\infty}\lambda_j Z^2_j$, where $Z_j\overset{i.i.d}{\sim}\mathcal{N}(0,1)$ and $\{
    \lambda_j\}$ are non-negative constants depending on the distribution of $(\bx,\by)$; $\mathbb{E}(Q)=1$.
\item If $\bx$ and $\by$ are dependent, then  as $n\rightarrow \infty$, $n\mathcal{V}^2_n(\bX,\bY)/S_2(\bX,\bY)\stackrel{P}{\rightarrow}\infty$.
\end{enumerate}
\end{corollary}

\subsection{Conditional independence test via projected distance covariance (P-DCov)\label{subsec::T}}
Here, we consider the case where $\bG_x$ and $\bG_y$ are linear in \eqref{eq::model-general}, which leads to the  following factor model setup:
\begin{align}\label{eq::model}
\bx = \bB_x \bff +\beps_{x}, \quad \by=\bB_y \bff+\beps_{y},
\end{align}
where $\bB_x$ and $\bB_y$ are factor loading matrices of dimension $p\times K$ and $q\times K$ respectively, and $\bff$ is the $K$-dimensional vector of common factors.  Here, we assume $p$ and $q$ are fixed,  the number of common factors $K$ could grow to infinity and the matrices $\bB_x$ and $\bB_y$ are sparse to reflect that $\bx$ and $\by$ only depend on several important factors.  As a result, we will impose regularization on the estimation of $\bB_x$ and $\bB_y$. Now, we are in the position to propose a test for problem \eqref{eq::indep-test-ori}. We first provide an estimate for the idiosyncratic components $\beps_{x}$ and $\beps_{y}$, and then calculate distance covariance between the estimates. More generally, we project  $\bx$ and $\by$ onto the space orthogonal to the linear space spanned by $\bff$ and evaluate the dependency between the projected vectors.
The conditional independence test is summarized in the following steps.


\noindent{\bf Step 1:}
Estimate factor loading matrices $\bB_x$ and $\bB_y$ by the penalized least square (PLS) estimators $\hbB_x$ and $\hbB_y$ defined as follows.
\begin{align}
\hbB_x &= \arg\min_{\bB} \frac{1}{2}\|\bX-\bB\bF\|_F^2+\sum_{j,k}p_{\lambda_1}(|B_{jk}|),\\
\hbB_y &= \arg\min_{\bB}\frac{1}{2} \|\bY-\bB\bF\|_F^2+\sum_{j,k}p_{\lambda_2}(|B_{jk}|),
\end{align}
where $\bX=(\bx_1, \bx_2, \dots, \bx_n)$, $\bY=(\by_1, \by_2, \dots, \by_n)$, $\bF=(\bff_1, \bff_2, \dots, \bff_n)$, $p_{\lambda}(\cdot)$ is the penalty function with penalty level $\lambda$.

\noindent{\bf Step 2:} Estimate the error vectors $\beps_{i,x}$ and $\beps_{i,y}$ by
\begin{align*}
\hbeps_{i,x} &=\bx_i-\hbB_x\bff_i=(\bB_x-\hbB_x)\bff_i+\beps_{i,x},\cr
\hbeps_{i,y} &=\by_i-\hbB_y\bff_i=(\bB_y-\hbB_y)\bff_i+\beps_{i,y},\ \ i=1,\dots,n.
\end{align*}

\noindent{\bf Step 3:} Define the estimated error matrices $\widehat\bE_x = (\hat\beps_{1,x}, \dots, \hat\beps_{n,x})$ and $\widehat\bE_y = (\hat\beps_{1,y}, \dots, \hat\beps_{n,y})$. Calculate the empirical distance covariance between $\widehat\bE_x$ and $\widehat\bE_y$ as
\[
\mathcal{V}^2_n(\widehat\bE_x,\widehat\bE_y)=S_1(\widehat\bE_x,\widehat\bE_y)+S_2(\widehat\bE_x,\widehat\bE_y)-2S_3(\widehat\bE_x,\widehat\bE_y).
\]

\noindent{\bf Step 4:} Define the P-DCov test statistic as $T(\bx,\by,\bff)=n\mathcal{V}^2_n(\widehat\bE_x,\widehat\bE_y)/S_2(\widehat\bE_x,\widehat\bE_y)$.

\noindent{\bf Step 5:} With a predetermined significance level $\alpha$, we reject the null hypothesis when $T(\bx,\by,\bff)>(\Phi^{-1}(1-\alpha/2))^2$.


Theoretical properties of the proposed conditional independence test will be studied in Section \ref{sec::TheoreticalP}.
In the above method, we implicitly assume that the number of variables $K$ is large so that the penalized least-squares methods are used.  When the number of variables $K$ is small, we can take $\lambda_1 = \lambda_2 = 0$ so that no penalization is imposed.

We would like to point out that after getting the estimated error matrices $\widehat\bE_x$ and $\widehat\bE_y$, one could apply other dependency measures including Hilbert Schmidt independence criterion \citep{gretton2005measuring}  and Heller-Heller-Gorfine test \citep{heller2012consistent}.

%
%

\subsection{Building graphs via conditional independence test}\label{subsec::graph-building}
Now we explore a specific application of our conditional independence test  to graphical models. To identify the conditional independence relationship in a graphical model, i.e.,  $z^{(i)}\independent z^{(j)}|\bz\setminus \{z^{(i)},z^{(j)}\}$, we assume
\begin{align}\label{eq::graph-internal-factor}
z^{(i)}_k=\bbeta^{\top}_{1,{ij}} \bff_k+\eps^{(i)}_{k},\ \ z^{(j)}_k=\bbeta^{\top}_{2,{ij}} \bff_k+\eps^{(j)}_{k},\ \ k=1,\dots,n,
\end{align}
where $\bff_k=(\bz^{(-i,-j)}_k)^{\top}$ represents all coordinates of $\bz_k$ other than $\bz^{(i)}_k$ and $\bz^{(j)}_k$, and $\bbeta_{1,{ij}}$ and $\bbeta_{2,{ij}}$ are $d-2$ dimensional  regression coefficients. Under model \eqref{eq::graph-internal-factor}, we decide whether edge $e_{ij}$ will be drawn through  directly testing $z^{(i)}\independent z^{(j)}|\mathcal{L}(\bz^{(-i, -j)})$, where $\mathcal{L}(\bff)$ is the linear space spanned by $\bff$.


More specifically, for each node pair $\{(i,j): 1\leq i<j \leq d\}$, we define $T^{(i,j)}=T(z^{(i)},z^{(j)},\bz^{(-i,-j)})$ using the same steps as in Section \ref{subsec::T} as the test for the current null hypothesis:
\begin{align}\label{eqn::edgetest}
  H_{0, ij}:  \beps^{(i)}\independent \beps^{(j)}. 
\end{align}

We now summarize the testing results by a graph in which nodes represent variables in $\bz$ and the edge $e_{ij}$ between node $i$ and node $j$ is drawn only when $H_{0, ij}$ is rejected at level $\alpha$.

In \eqref{eq::graph-internal-factor}, the factors are created internally via the observations on remaining nodes $\bz\setminus \{z^{(i)},z^{(j)}\}$. In financial applications, it is often desirable to build graphs when conditioning on external factors. In such cases, it is straightforward to change the factors in \eqref{eq::graph-internal-factor} to  external factors.

We will demonstrate the two different types of conditional dependency graphs via examples in Sections \ref{sec::NumericalS} and \ref{sec::Realdata}.

\subsection{Graph estimation with FDR control}\label{subsec::FDR control}
 Through the graph building process described in Section \ref{subsec::graph-building},  we can carry out $\bar{d}= d(d-1)/2$ P-DCov tests simultaneously and we wish to control the {\it false discovery rate} (FDR) at a pre-specified level $0<\alpha<1$. Let $R_F$ and $R$ be the number of falsely rejected hypotheses and the number of total rejections, respectively. The {\it false discovery proportion }(FDP) is defined as $R_F/\max\{1, R\}$ and the FDR is the expectation of FDP.

In the literature, various procedures have been proposed for conducting large-scale multiple hypothesis testing via FDR control.  \cite{liu2013gaussian} proposed a procedure for estimating large Gaussian graphical models with FDR control.  \cite{fan2018farm} proposed dependency-adjusted tests by estimating the latent factors that drive the dependency of these tests.
 In this work, we will follow the most commonly used Benjamini and Hochberg (BH) procedure developed in the seminal work of \cite{benjamini1995controlling}, where P-values of all marginal tests are compared. More specifically, let $P_{(1)}\leq P_{(2)} \leq \dots \leq P_{( \bar{d} )}$ be the ordered P-values of the $\bar{d}$ hypotheses given in (\ref{eqn::edgetest}). Let $s=\max\{ 0 \leq i \leq \bar{d} : P_{(i)}\leq  \alpha i /\bar{d} \}$, and we reject the $s$ hypotheses $H_{0, ij}$ with the smallest P-values. We will demonstrate the performance of this strategy via the real data example in Section \ref{sec::Realdata}.  

\subsection{Extension to functional projection}\label{subsec:: Functional Projection}
In the P-DCov described in Section \ref{subsec::T}, we assume the conditional dependency of $\bx$ and $\by$ given  factor $\bff$ is expressed via a linear form of $\bff$. In other words, we are projecting $\bx$ and $\by$ onto the space orthogonal  to $\mathcal{L}(\bff)$ and evaluate the dependence between the projected vectors. Although this linear projection assumption makes the theoretical development easier and delivers the main idea of this work, a natural extension is to consider a nonlinear projection. In particular, we consider the following additive generalization \citep{stone1985additive}  of the factor model setup:
\begin{align}\label{eq::model-additive}
\bx = \sum_{j=1}^K \bg^x_j(f_{j})+\beps_{x}, \by=  \sum_{j=1}^K \bg^y_j(f_{j})+\beps_{y},
\end{align}
where $\{\bg^x_j(\cdot), \bg^y_j(\cdot), j=1,\dots,K\}$ are unknown vector-valued functions we would like to estimate.   In \eqref{eq::model-additive}, we consider the additive space spanned by factor $\bff$. By this extension, we could identify more general conditional dependency structures between $\bx$ and $\by$ given $\bff$. This is a special case of \eqref{eq::model-general}, but avoids the issue of curse of dimensionality.

In the high-dimensional setup where $K$ is large, we can use a penalized additive model  \citep{ravikumar2009sparse, fan2011nonparametric}  to estimate the unknown functions. The conditional independence test described in Section \ref{subsec::T} could be modified by replacing the linear regression with the (penalized) additive model regression.  We will investigate the P-DCov method coupled with the sparse additive model \citep{ravikumar2009sparse} in  numerical studies.

\section{Theoretical Results}\label{sec::TheoreticalP}
In this section, we derive the asymptotic properties of our conditional independence test. First, we introduce  several assumptions  on $\beps_x$, $\beps_y$ and $\bff$.

\begin{condition}\label{Assump:Moments}
$\mathbb{E}\beps_{x}=\mathbb{E}\beps_{y}={\bf 0}$,\ \ $\mathbb{E} \|\beps_{x}\|^2<\infty$, $\mathbb{E} \|\beps_{y}\|^2<\infty$.

\end{condition}

\begin{condition}\label{Assump:Tail Condition} We denote  $h_x$ as the density function of random variable $x$. Let us assume that the densities of $\|\beps_{1,x}-\beps_{2,x}\|$ and $\|\beps_{1,y}-\beps_{2,y}\|$ are bounded on $[0,C_0]$, for some positive constant $C_0$. In other words, there exists a positive constant $M$,
\[
\max_{t\in[0,C_0]} h_{\|\beps_{i,x}-\beps_{j,x}\|}(t)\leq M, \ \ \ \ \max_{t\in[0,C_0]} h_{\|\beps_{i,y}-\beps_{j,y}\|}(t)\leq M.
\]
\end{condition}
 
\begin{remark}
	Conditions \ref{Assump:Moments} and \ref{Assump:Tail Condition} impose mild moment and distributional assumptions on  random errors $\beps_x$ and $\beps_y$. We use the following two simple examples to provide some intuitions regarding Condition \ref{Assump:Tail Condition}.  Assume $\beps_{i,x}\sim\mathcal{N}({\bf 0},{\bf I}_p)$, for $i=1,\ 2$, we have $\beps_{1,x}-\beps_{2,x}\sim\mathcal{N}({\bf 0},2 {\bf I}_p)$ and hence $\|\beps_{1,x}-\beps_{2,x}\|^2\sim 2 \chi^2(p)$. Therefore, 
%
\[
h_{\|\beps_{i,x}-\beps_{j,x}\|}(t) = \frac{1}{2^{p-1}\Gamma(p/2)}t^{p-1} e^{-\frac{t^2}{4}}.
\]

It is easy to observe that, with $C_0=1$ and $M=1$, Condition 2 is satisfied. Now instead of an identity covariance matrix, let us consider the other extreme case with all coordinates copies or negative copies of one variable (the case where all correlations equal 1 or -1). Then $\|\beps_{1,x}-\beps_{2,x}\|^2\sim 2p\cdot \chi^2(1)$. 
Therefore, 
\[
h_{\|\beps_{i,x}-\beps_{j,x}\|}(t) = \frac{1}{\sqrt{p}\cdot\Gamma(1/2)}e^{-\frac{t^2}{4p}}.
\]
Again, with $C_0=1$ and $M=\frac{1}{\Gamma(1/2)}$, Condition 2 is satisfied. 
\end{remark}

To better understand when the proposed projection method works, we give the following high-level assumptions, whose justifications are noted below. 

\begin{condition}\label{Assump:dn} There exist   constants $C_1>1$ and $\gamma>0$, such that for any $C_2>1$, with probability greater than $1-C_1^{-C_2}$, we have for any $n$,
\[
 \|(\bB_x-\hbB_x)\bF\|_{2,\infty} \leq C_2 a_n,  \qquad \|(\bB_y-\hbB_y)\bF\|_{2,\infty} \leq C_2 a_n,
\]
where the sequence $a_n=o\{n^{-1/4}\wedge(n^{(1+\gamma)}\log n)^{-1/3}\}$.

%

\end{condition}

\begin{condition}\label{Assump:l1}
Let $\bB_{x,l}$ denote the $l$-th row of $\bB_x$, and similarly we define $\hbB_{x,l}$, $\bB_{y,l}$ and $\hbB_{y,l}$.  We assume for any fixed $l$,
\begin{align*}
\|\bB_{x,l}-\hbB_{x,l}\|_1 = O_p(e_n), \qquad \|\bB_{y,l}-\hbB_{y,l}\|_1 = O_p(e_n),
\end{align*}
where sequences $e_n$ and $a_n$ in Condition \ref{Assump:dn} satisfy $a_n e_n = o(\frac{1}{\sqrt{n}\log K })$.
\end{condition}

\begin{remark}
Conditions \ref{Assump:dn} and \ref{Assump:l1} are mild. They are imposed to ensure the quality of the projection and guarantee the theoretical properties regarding our conditional independence test. For example, one could directly call the results from penalized least squares for high-dimensional regression \citep{belloni2011square, buhlmann2011statistics, hastie2015statistical}  and  robust estimation \citep{belloni2011, wang2013l1,  fan2014robust}. We now discuss two special examples as follows.
\begin{enumerate}
\item ($K$ is fixed) In this fixed dimensional case, it is straightforward to verify that the projection based on ordinary least squares satisfies the two conditions.
\item (Sparse Linear Projection) Let $\bB_x = (\bb_1^T, \bb_2^T,\dots, \bb_p^T)^T$ and $\widehat\bB_x = (\widehat\bb_1^T, \widehat\bb_2^T,\dots, \widehat\bb_p^T)^T$. Note that the graphical model case corresponds to  $p=1$.   We apply the popular $L_1$-regularized least squares for each dimension of $\bx$ regressing on the factor $\bF$. Here, we further assume the true regression coefficient $\bb_j$ is sparse for each $j$ with $S_j = \{k: (\bb_j)_k\neq 0\}$, $\hat S_j = \{k: (\widehat\bb_j)_k\neq 0\}$ and $|S_j|=s_j$. From Theorem 11.1, Example 11.1 and Theorem 11.3 in \cite{hastie2015statistical}, and since  $\{\bff_i\}_{i=1}^n$ are i.i.d., we have with probability going to 1,
$\|\widehat \bb_j -\bb_j\|\le C\sqrt{\frac{s_j\log K}{n}}$, $\hat S_j = S_j$ and $\max_i \|(\bff_i)_{S_j}\|\le s_j \log n$. Then, we have with probability going to 1, for each $i=1,\dots, n$ and $j = 1,\dots,p$,
\begin{align}\label{eq::lasso.case}
	\|(\widehat \bb_j -\bb_j)^T\bff_i\|  =  \|(\widehat \bb_j - \bb_j)_{S_j}^T(\bff_i)_{S_j}\| &\le  \|(\widehat \bb_j - \bb_j)_{S_j}\|\|(\bff_i)_{S_j}\|\nonumber\\
	&\le Cs_{\max} \log n \sqrt{\frac{s_{\max}\log K}{n}},
\end{align}
where $s_{\max}=\max_j s_j$.
It is now easy to verify that Condition \ref{Assump:dn} and \ref{Assump:l1} are satisfied even under the ultra-high-dimensional case where $\log K=o(n^a), 0<a<1/3$. We would like to omit the details here for brevity about the specification of various constants.
\end{enumerate}
\end{remark}

%
%

\begin{theorem}\label{thm::consistency}Under Conditions \ref{Assump:Moments} and \ref{Assump:dn},
\[
\mathcal{V}^2_n(\hbeps_x,\hbeps_y)\convinprob\mathcal{V}^2(\beps_x,\beps_y).
\]

In particular, when $\beps_x$ and $\beps_y$ are independent, $\mathcal{V}^2_n(\hbeps_x,\hbeps_y)\convinprob 0$.
\end{theorem}
%
%

Theorem \ref{thm::consistency} shows that the sample distance covariance between the estimated residual vectors converges to the distance covariance between the population error vectors. It enables us to use the distance covariance of the estimated residual vectors to construct the conditional independence test as described in Section \ref{subsec::T}.

\begin{theorem}\label{thm::null-distribution} Under Conditions \ref{Assump:Moments}-\ref{Assump:l1}, and  the null hypothesis that $\beps_x \independent \beps_y$ (or equivalently $\bx\independent \by|\bff$),
$$
n\mathcal{V}^2_n(\hbeps_x,\hbeps_y)\convindist\|\zeta\|^2,
$$
where $\zeta$ is a zero-mean Gaussian process defined analogously as in Lemma~\ref{thm::old-asymptotic}.
\end{theorem}

Theorem \ref{thm::null-distribution} provides the asymptotic distribution of the test statistic $T(\bx,\by,\bff)$ under the null hypothesis, which is the basis of Theorem \ref{thm::TestSigLev}.

\begin{corollary}\label{coro::new-teststat}
Under the same conditions of Theorem \ref{thm::null-distribution}, \[
n\mathcal{V}^2_n(\hbeps_x,\hbeps_y) /S_2(\hbeps_x,\hbeps_y) \convindist Q, \ \ \text{where}\ \ Q\overset{\mathcal{D}}=\sum_{j=1}^\infty \lambda_j Z^2_j,
\]
where $Z_j\overset{i.i.d}{\sim}\mathcal{N}(0,1)$ and $\{
    \lambda_j\}$ are non-negative constants depending on the distribution of $(\bx,\by)$; $\mathbb{E}(Q)=1$.
\end{corollary}

\begin{theorem}\label{thm::TestSigLev}
Consider the test that rejects conditional independence when
\begin{align}\label{eq::reject-region}
\frac{n\mathcal{V}^2_n(\hbeps_{x},\hbeps_{y})}{S_2(\hbeps_x,\hbeps_y)}>(\Phi^{-1}(1-\alpha/2))^2,
\end{align}
where $\Phi(\cdot)$ is the cumulative distribution function of $\mathcal{N}(0,1)$. Let $\alpha_n(\bx,\by,\bff)$ denote its associated type I error. Then under Conditions \ref{Assump:Moments}-\ref{Assump:l1}, for all $0<\alpha\le0.215$,
\begin{enumerate}[label=(\roman*)]
\item $\lim_{n\rightarrow\infty}\alpha_n(\bx,\by,\bff)\leq \alpha$,
\item $\sup_{\beps_{x}\independent\beps_{y}}\lim_{n\rightarrow \infty}\alpha_n(\bx,\by,\bff)=\alpha$.
\end{enumerate}
\end{theorem}
Part (i) of Theorem \ref{thm::TestSigLev} indicates the proposed test with critical region \eqref{eq::reject-region} has an asymptotic significance error at most $\alpha$. Part (ii) of Theorem \ref{thm::TestSigLev} implies that there exists a pair $(\beps_x,\beps_y)$ such that the pre-specified significant level $\alpha$ is achieved asymptotically.  In other words, the size of testing $H_0: \beps_{x}\independent\beps_{y}$ is $\alpha$.

\begin{remark}
	When the sample size $n$ is small, the theoretical critical value in \eqref{eq::reject-region} could sometimes be too conservative in practice \citep{szekely2007measuring}. Therefore, we recommend using random permutation to get a reference distribution for the test statistic $T(\bx,\by,\bff)$ under $H_{0}$. Random permutation is used to decouple $\beps_{i,x}$ and $\beps_{i,y}$ so that the resulting pair  $(\beps_{\pi(i),x}, \beps_{i,y})$ follows the null model, where $\{\pi(1), \dots, \pi(n)\}$ are a random permutation of  indices $\{1, \dots, n\}$. Here, we set the number of permutations  $R(n)=\lfloor 200+5000/n\rfloor$ as in \cite{szekely2007measuring}.
	Consequently, we can also estimate the P-value associated with the conditional independence test based on the quantiles of the test statistics over $R(n)$ random permutations.
\end{remark}

\section{Monte Carlo Experiments}\label{sec::NumericalS}
In this section, we investigate the performance of P-DCov with five simulation examples.  In Example \ref{exmp::SimTestB}, we consider a factor model and test the conditional independence between two vectors $\bx$ and $\by$ given their common factor $\bff$, via P-DCov. In Examples \ref{exmp::GaussianGraph}, we investigate the classical Gaussian graphical model. In Example \ref{exmp::smu}, we consider the case of general graphical model without the Gaussian assumption. In Example \ref{exmp::factor-graph}, we consider the case of dependency graph with the contribution of external factors. In Example \ref{exmp::smu-factor}, we consider a general graphical model with external factors.

\begin{example}\label{exmp::SimTestB} [High-dimensional factor model]
Let $p=5$, $q=10$ and $K=1000$. The rows of $\bB_x$ and rows of $\bB_y$ are drawn independently from  $\bz_K=(\bz_1^T,\bz_2^T)^T$, where $\bz_1$ is a 3-dimensional vector with elements i.i.d. from $Unif[2,3]$ and $\bz_2=\bzero_{K-3}$.
 $\{\bff_i\}^n_{i=1}$ are i.i.d. from $\mathcal{N}({\bf 0},{\bf I}_K)$. We generate $n$ $i.i.d.$ copies $\{\br_i\}_{i=1}^n$ from  log-normal distribution $\ln \mathcal{N}({\bf 0},\bSigma)$ (heavy-tail)  where $\bSigma$ is an equal correlation matrix of size $(p+q)\times(p+q)$ with $\Sigma_{jk}=\rho$ when $j \not = k$ and $\Sigma_{jj}=1$.  $\beps_{i,x}$ and $\beps_{i,y}$ are the centered version of the first $p$ coordinates and the last $q$ coordinates of $\br_i$. Then, $\{\bx_i\}^n_{i=1}$ and $\{\by_i\}^n_{i=1}$ are generated according to $\bx_i=\bB_x \bff_i+\beps_{i,x}$ and $\by_i = \bB_y\bff_i+\beps_{i,y}$ correspondingly.
\end{example}

In Example \ref{exmp::SimTestB}, we consider a high-dimensional factor model with sparse structure. Note that the errors are generated from a heavy tail distribution to demonstrate the proposed test works beyond  Gaussian errors. We assume each coordinate of $\bx$ and $\by$ only depends on the first three factors.
We calculate $T(\bx,\by,\bff)$ in the P-DCov test, and $T_0 (\bx,\by,\bff)$ in which we replace $\hat{\beps}_{i,x}$ and $\hat{\beps}_{i,y}$ by the true $\beps_{i,x}$ and $\beps_{i,y}$ as an oracle test to compare with. To get reference distributions of $T(\bx,\by,\bff)$ and $T_0 (\bx,\by,\bff)$, we follow the permutation procedure as described in Section \ref{sec::TheoreticalP}.  In this example, we set the significance level $\alpha= 0.1$.
We vary the sample size from 100 to 300 with increment of 20 and show the empirical power based on 2000 repetitions for both $T(\bx,\by,\bff)$ and $T_0(\bx,\by,\bff)$ in Figure \ref{fig:: Example 1} for $\rho\in\{0.1,0.2,0.3,0.4\}$. In the implementation of  penalized least squares in Step 1, we use R package \pkg{glmnet} with the default tuning parameter selection method (10-fold cross-validation) and perform least square on the selected variables  to reduce estimation bias of these estimated parameters \citep{belloni2013least}. It is worth mentioning that an alternative approach to reduce the estimation bias is the de-biased lasso method \citep{zhang2014confidence,van2014asymptotically}. Here, we decided to use the least square post model selection approach due to its simplicity and computational efficiency.

From Figure \ref{fig:: Example 1}, it is clear that as the sample size or $\rho$ increases, the empirical power also increases in general. Also, comparing the panels (A) and (B) in Figure \ref{fig:: Example 1}, we  see that when the sample size is small, the P-DCov test has smaller power than the oracle test, however, the difference between them becomes negligible as the sample size increases.  This is consistent with our theory regarding the asymptotic distribution of the test statistics.  When $\rho=0$, Table \ref{Table::Example 1} reports the empirical type I error for both P-DCov as well as the oracle version. It is clear that the type I error of P-DCov is under good control as the sample size increases.

\begin{table}
\caption{Type I error of Example 1\label{Table::Example 1}}
\centering
\begin{tabular}{cccccccccccc}
\hline
& \multicolumn{11}{c}{Test based on $\hat{\beps}_x$ and $\hat{\beps}_y$}\\
\hline
$n$&100&120&140&160&180&200&220&240&260&280&300 \\ \hline
 & 0.119&0.114&0.116&0.100&0.098&0.097&0.092&0.102&0.094&0.091&0.096 \\
\hline
& \multicolumn{11}{c}{Test based on $\beps_x$ and $\beps_y$} \\
\hline
$n$&100&120&140&160&180&200&220&240&260&280&300 \\
\hline
&0.086&0.102&0.104&0.094&0.092&0.091&0.096&0.103&0.098&0.092&0.095
\\

\end{tabular}
\end{table}

\begin{figure}[tb!]
\caption{Power-sample size graph of Example 1\label{fig:: Example 1}
}
\begin{tabular}{cc}
	\includegraphics[scale=0.42]{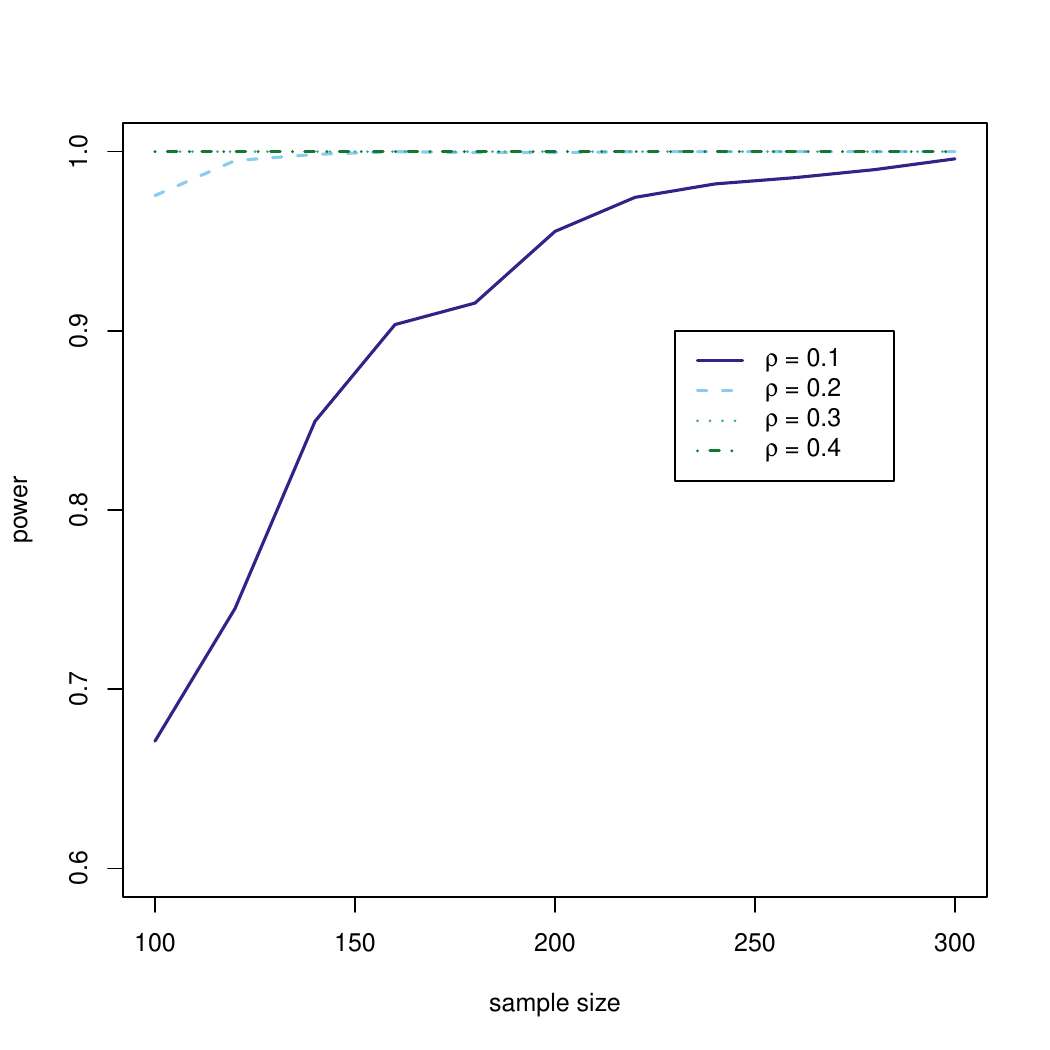}&\includegraphics[scale=0.42]{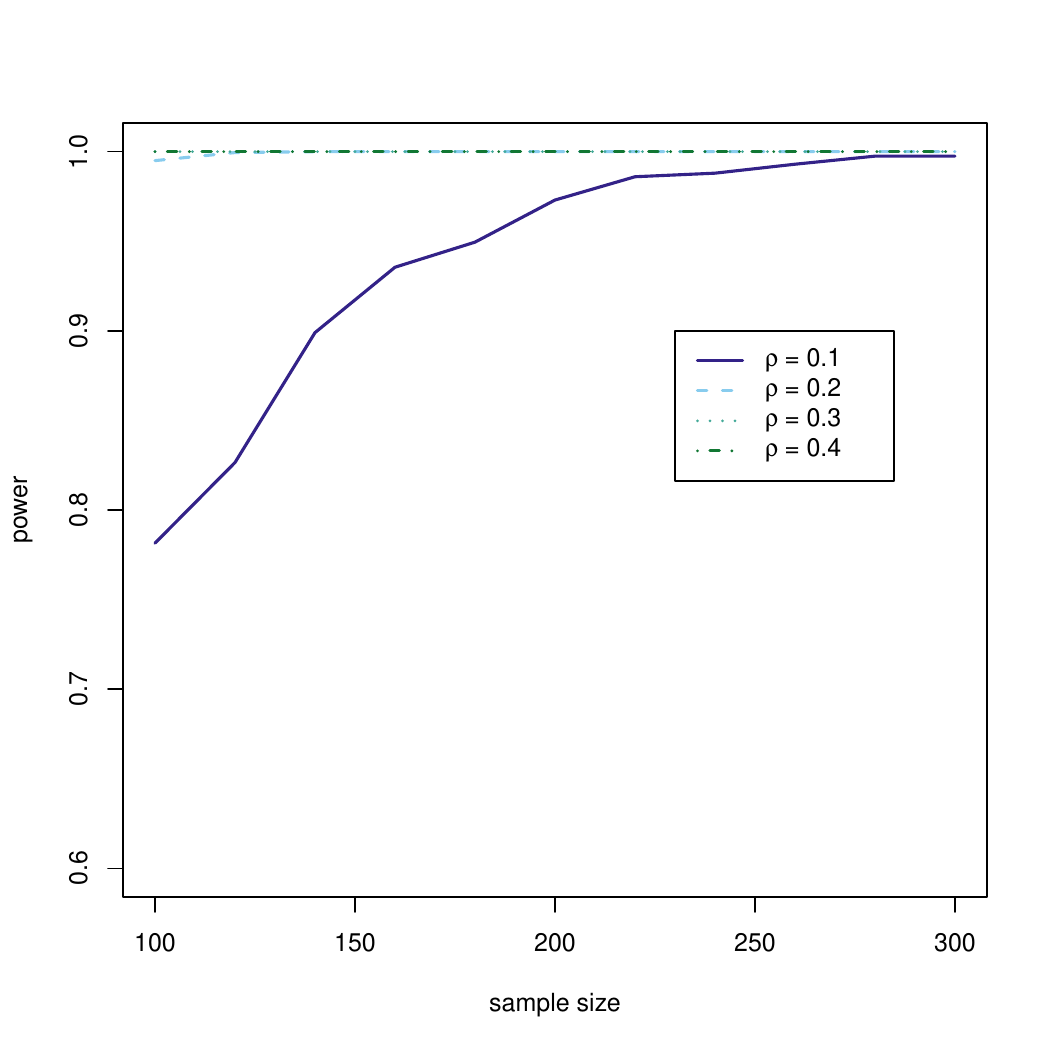}\\
	(A) $T(\bx,\by,\bff)$&(B) $T_0(\bx,\by,\bff)$\\
\end{tabular}

\end{figure}

\begin{example}\label{exmp::GaussianGraph}[Gaussian graphical model]
We consider a Gaussian graphical model with precision matrix $\bOmega=\bSigma^{-1}$,  where $\bOmega$ is a tridiagonal matrix of size $d\times d$, and is associated with the autoregressive process of order one. We set $d=100$ and the $(i,j)$-element in $\bSigma$ to be $\sigma_{i,j}=\exp(- | s_i-s_j |)$, where $0=s_1<s_2<\dots<s_d$. In addition,
\[
s_i-s_{i-1}\overset{\text{i.i.d}}{\sim}  \text{Uniform} \ (1,3), \ \ i=2,\dots,d.
\]

\end{example}

In this example, we would like to compare the proposed P-DCov with the state-of-the-art approaches for recovering Gaussian graphical models.
In terms of recovering structure $\Omega$, we compare lasso.dcov (projection by lasso followed by distance covariance), sam.dcov (projection by sparse additive model followed by distance covariance), lasso.pearson (projection by lasso followed by Pearson correlation), sam.pearson (projection by sparse additive model followed by Pearson correlation)  with three popular estimators corresponding to the lasso, adaptive lasso and scad penalized likelihoods (called  graphical.lasso, graphical.alasso and graphical.scad on the graph) for the precision matrix \citep{friedman2008sparse, fan2009network}. Here, lasso.dcov and sam.dcov are two examples of our P-DCov methods.  We use  R package \pkg{SAM} to fit the sparse additive model.  To evaluate the performances, we construct receiver operating characteristic (ROC) curves for each method with sample sizes $n=100$ and $n=300$. The process of constructing the ROC curves involves conducting the P-DCov test for each pair of nodes and record the corresponding P-values.  In each of the ROC curve, true positive rates (TPR) are plotted against false positive rates (FPR) at various thresholds  of those P-values (``TP" means the true entry of the precision matrix is nonzero and estimated as nonzero; ``FP" means the true entry of the precision matrix is zero but estimated as nonzero).  We follow the implementation in  \cite{fan2009network} for the three penalized likelihood estimators.  The average results over 100 replications of different methods are reported in Figure \ref{fig:: ggm}. The associated AUC (Area Under the Curve) for each method is also displayed in the legend of the figure.

\begin{figure}[tb!]
\caption{ROC curves for Gaussian graphical models with AUCs in legends.\label{fig:: ggm}}
\begin{tabular}{cc}
	\includegraphics[scale=0.4]{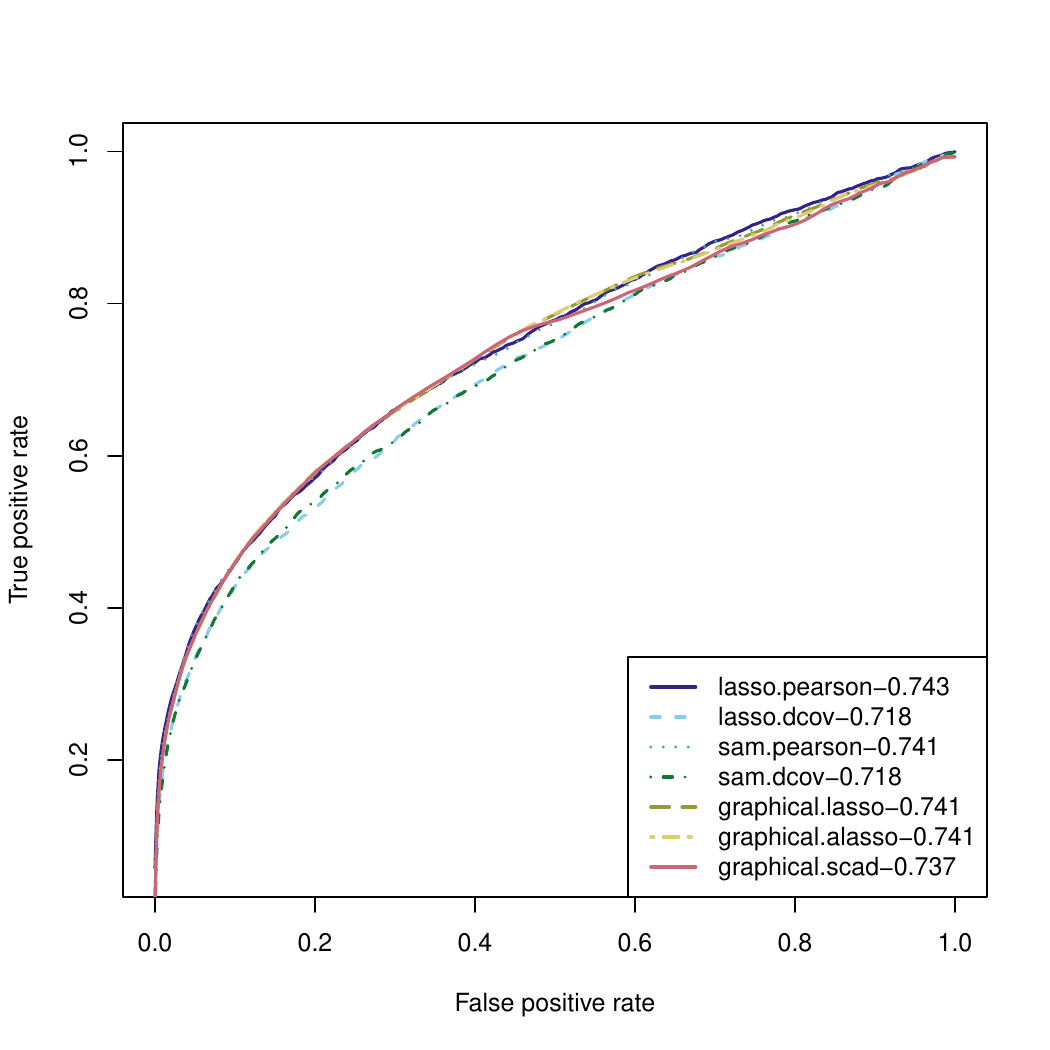}&\includegraphics[scale=0.4]{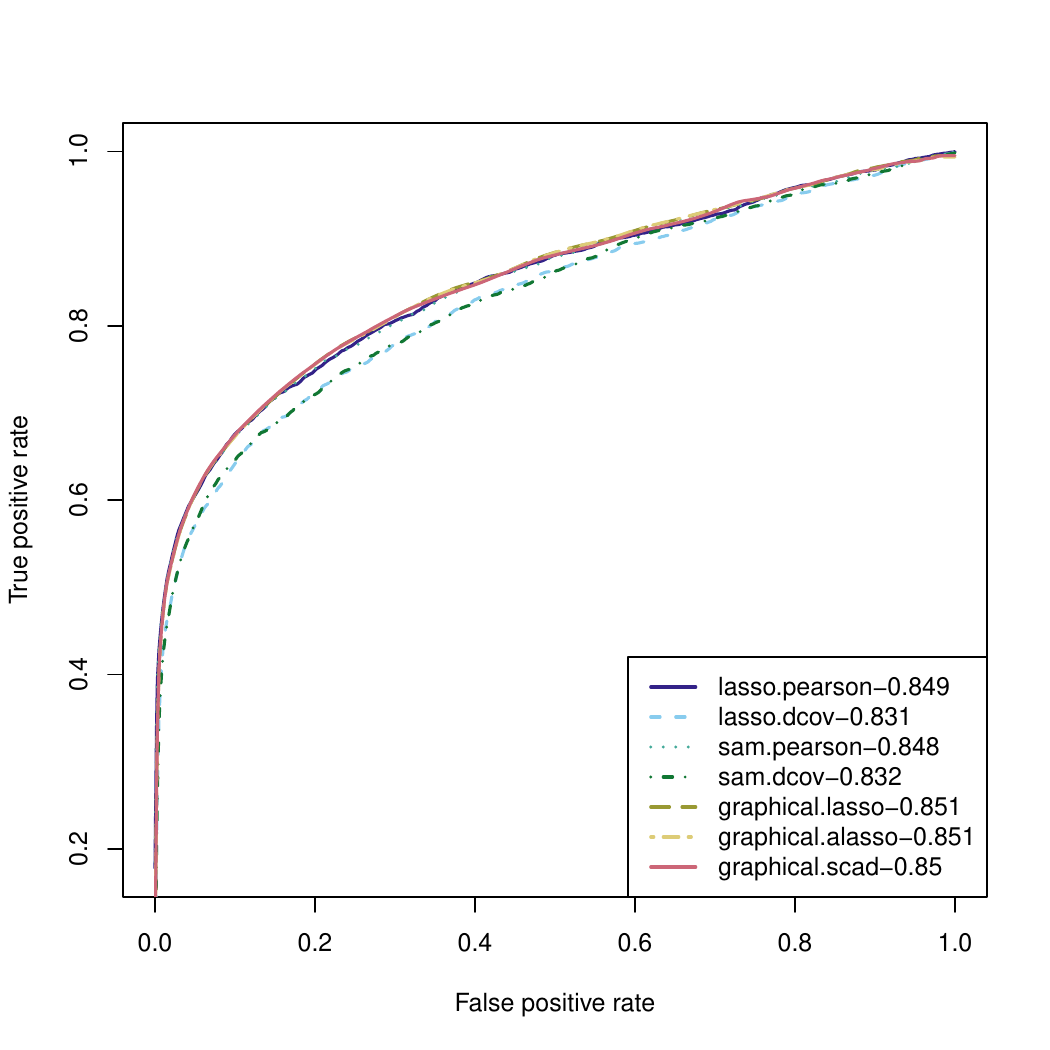}\\
	(A) $n=100$&(B) $n=300$\\
\end{tabular}

\end{figure}

We observe that lasso.pearson and sam.pearson perform similarly to the penalized likelihood methods when $n=100$. On the other hand, lasso.dcov and sam.dcov lead to slightly smaller AUC value due to the use of the distance covariance, which is expected for the Gaussian model.  This shows that we do not pay a big price for using the more complicated distance covariance and sparse additive model.

\begin{example}\label{exmp::smu}[A general graphical model]
We consider a general graphical model with a combination of multivariate $t$ distribution  and  multivariate Gaussian distribution. The dimension of $\bx$ is $d=100$.  In detail,
$\bx= (\bx_1^T,\bx_2^T,\bx_3^T)^T$ where $\bx_1$ follows a 20 dimensional multivariate $t$ distribution with degrees of freedom 5, location parameter $0$ and identity covariance matrix, $\bx_2$  follows the same Gaussian graphical model as in Example \ref{exmp::GaussianGraph} except the dimension is now 10, and $\bx_3\sim\mathcal{N}({\bf 0}, {\bf I}_{70})$. In addition, $\bx_1$, $\bx_2$, and $\bx_3$ are mutually independent.
\end{example}
To generate a multivariate $t$-distribution, we first generate a random vector $\bw_{20}$ from the standard multivariate Gaussian distribution and  an independent random variable $\tau \sim \chi^2 (5)$ and then set $\bx_1 = \bw/\sqrt{\tau}$. One important fact about the multivariate $t$ distribution is that the zero element in the precision matrix does not imply conditional independence like the case of Gaussian graphical models \citep{finegold2009robust}. Indeed, for $\bx_1$, we actually have the fact that $\bx_1^{(i)}$ and $\bx_1^{(j)}$ are dependent given $\bx_1^{(-i, -j)}$ for any pair $1\le i\neq j\le 20$. On the contrary, the Gaussian likelihood based methods will falsely claim that all the components of $\bx_1$ are independent, because the corresponding elements in $\bOmega$ are 0.

The average ROC curve results are rendered in Figure \ref{fig:: smu}. As expected, by using the new projection based distance covariance method for testing conditional independence, lasso.dcov outperforms all the other methods in terms of AUC, with a more evident advantage when $n=300$. One interesting observation is that: in the region where FPR is very low, the likelihood based methods actually outperform P-DCov methods. One possible reason is that the likelihood based methods are more capable of capturing the conditional dependency structure within $\bx_2$ as it follows a Gaussian graphical model.

\begin{figure}[tb!]
\caption{ROC curves for a general graphical model  with AUCs in legends.\label{fig:: smu}}
\begin{tabular}{cc}
	\includegraphics[scale=0.4]{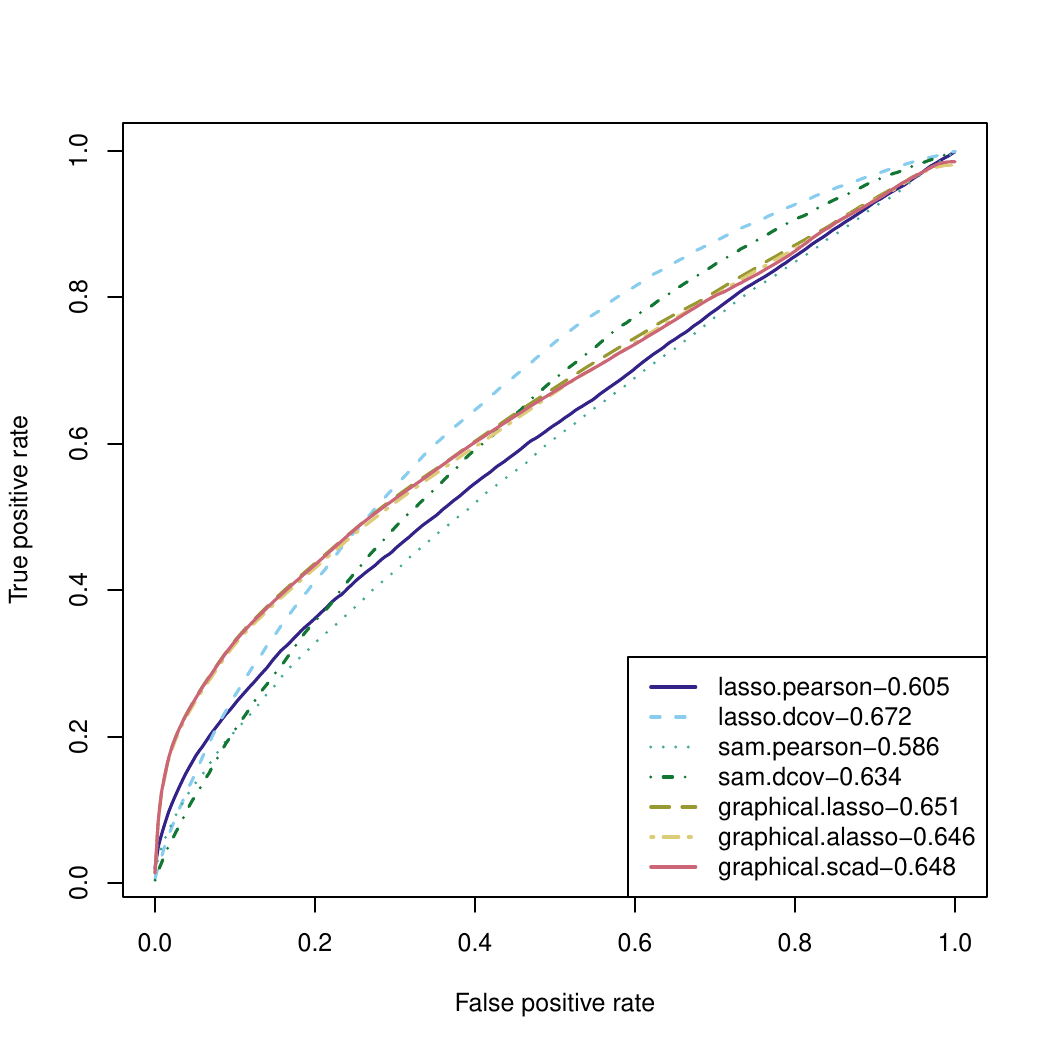}&\includegraphics[scale=0.4]{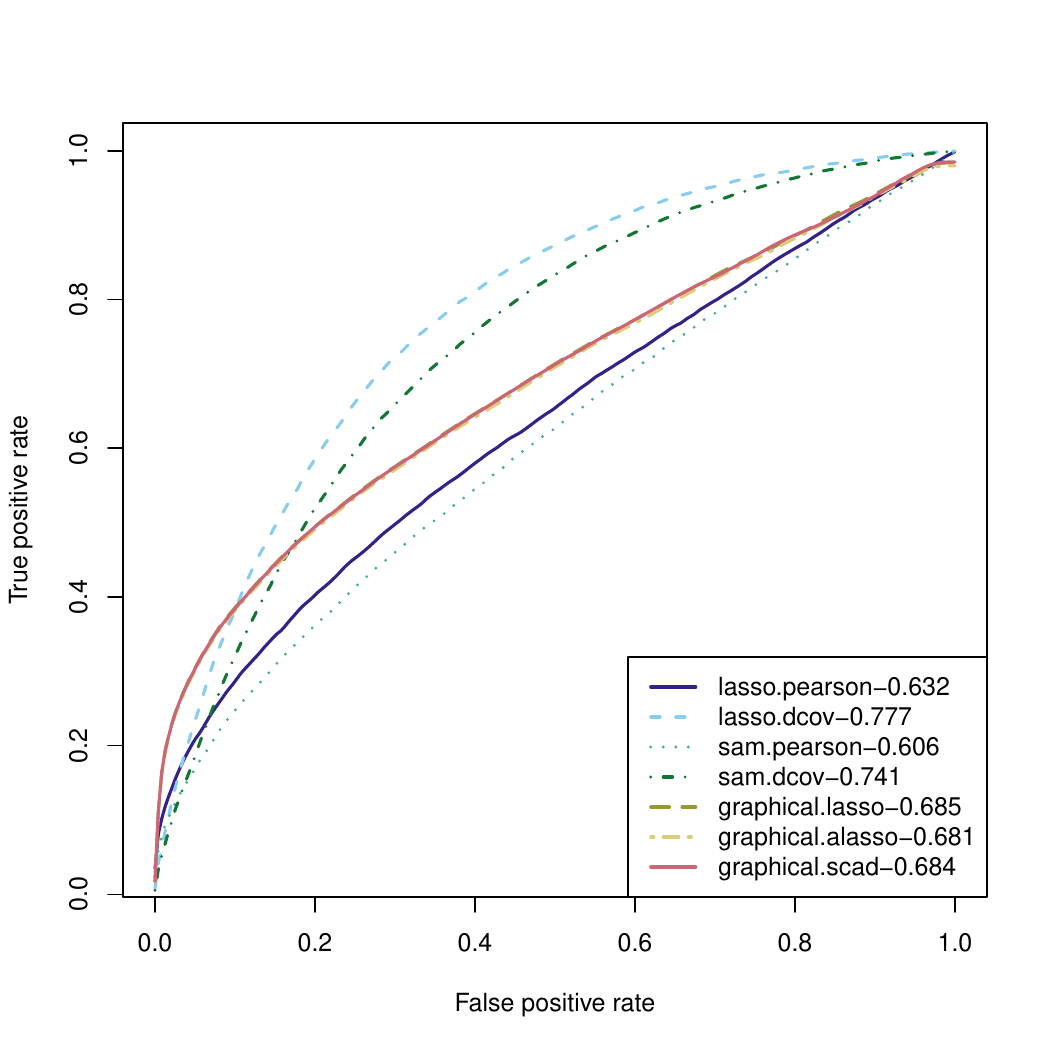}\\
	(A) $n=100$&(B) $n=300$\\
\end{tabular}

\end{figure}

%
%
%
\begin{example}\label{exmp::factor-graph}[Dependency graph with external factors]
We consider a dependency graph with the contribution of external factors. In particular, we generate $\bu\sim \mathcal{N}({\bf 0},\bOmega)$, where $\bOmega$ is the same tridiagonal matrix used in Example \ref{exmp::GaussianGraph} except the dimension is now 30 and $\bff \sim \mathcal{N}({\bf 0}, \bI_{300})$, then the observation $\bx = \bu + \bQ g(\bff)$ where $\bQ_{30\times 300}$ is a sparse coefficient matrix that dictates how each dimension of $\bx$ depends on the factor $g(\bff)$. In particular, we let $\bQ=[\tilde\bQ_{30\times 15}, {\bf 0}_{30\times 285}]$ with the generation of $\tilde\bQ$ follows the setting in \cite{cai2013covariate}.  For each element $\tilde Q_{ij}$, we first generate a Bernoulli distribution with success probability 0.2 to determine whether $\tilde Q_{ij}$ is 0 or not. If $\tilde Q_{ij}$ is not 0, we then generate $\tilde Q_{ij} \sim \mbox{Uniform}\mbox{ }(0.5,1)$. Here we consider two forms of $g(\cdot)$, namely $g(\bff)=\bff$ and $g(\bff)=\bff^2$.
\end{example}
Now, we report results regarding the average ROC curves  for lasso.pearson, lasso.dcov, \\sam.pearson and sam.dcov. The results for both $g(\bff)=\bff$ and $g(\bff)=\bff^2$ are depicted in Figure \ref{fig:: factor}. Note that we are not building a conditional dependency graph among $\bx$, but a dependency graph of $\bx$ conditioning on the external factor $\bff$. There are some insightful observations from the figure. First of all, by looking at the first case when $g(\bff) =\bff$, it is clear that lasso.pearson is the best as it takes advantage of the sparse linear structure paired with the Gaussian distribution of the residual. By using the distance covariance as a dependency measure, or by using the sparse additive model as a projection method, it is reassuring that we do not lose much efficiency. Second, for the case when $g(\bff)=\bff^2$ and $n=300$, we can see a substantial advantage of the sparse additive model based methods as they can capture this nonlinear contribution of the factors to the dependency structure of $\bx$.

\begin{figure}[tb!]
\caption{ROC curves for factor based dependency graph with AUCs in legends. \label{fig:: factor}}
\begin{tabular}{cc}
	\includegraphics[scale=0.4]{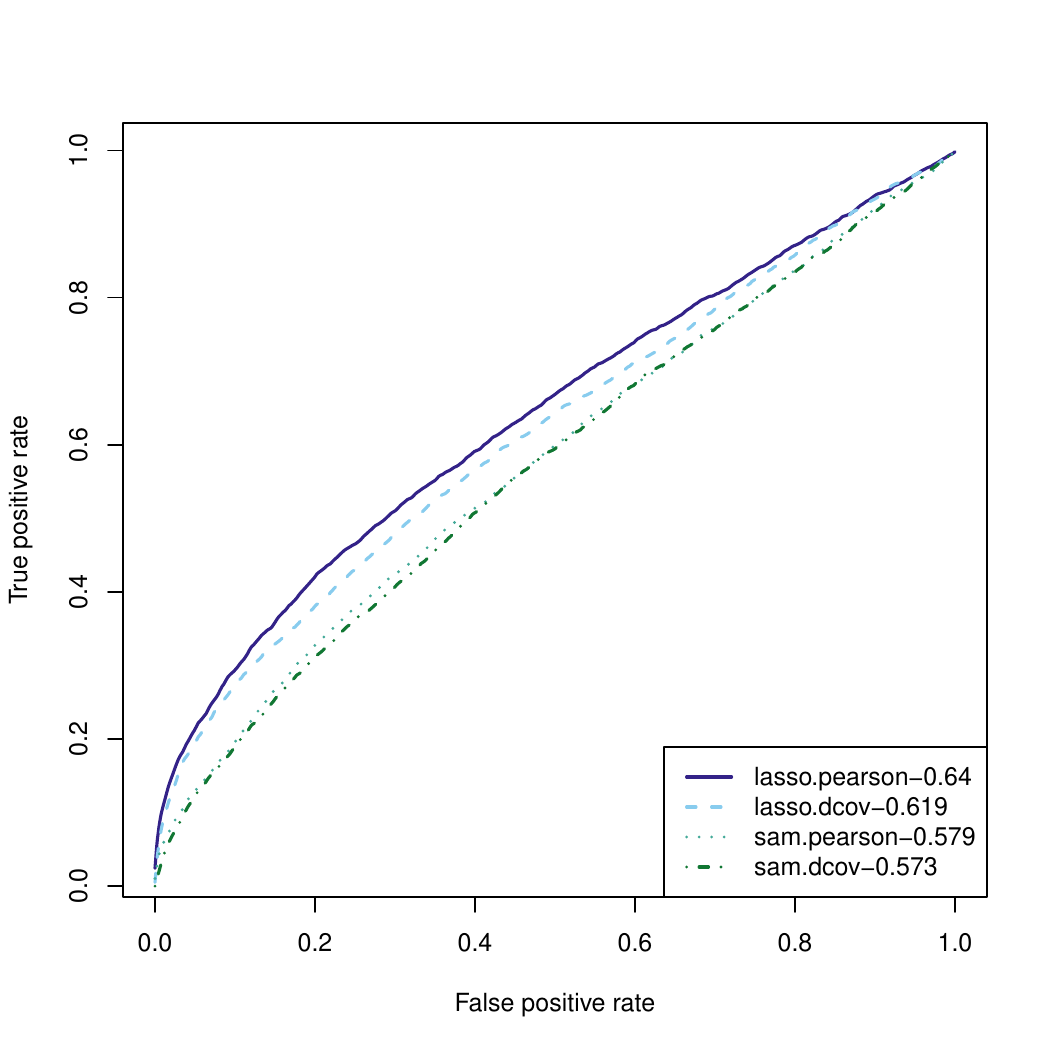}&\includegraphics[scale=0.4]{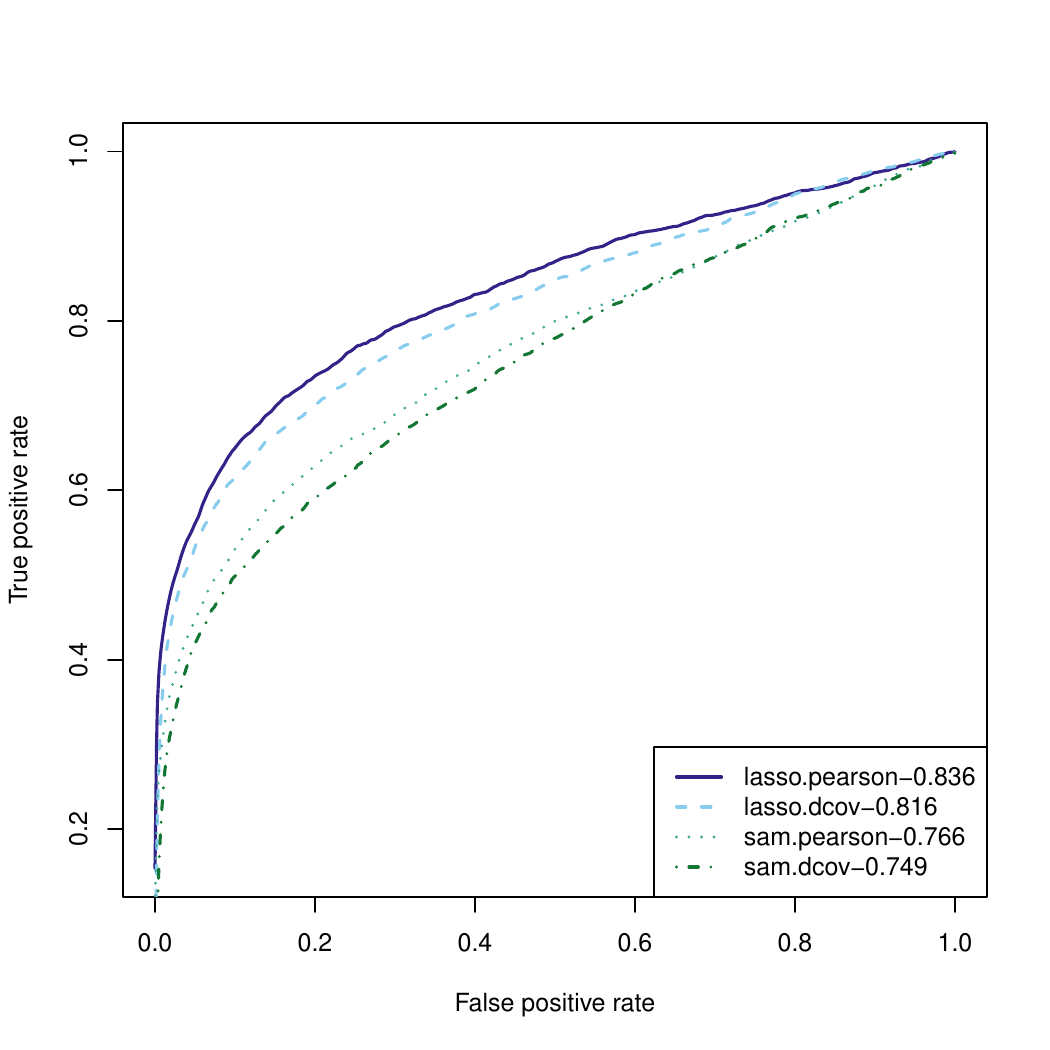}\\
	(A) $n=100$, $g(\bff)=\bff$ &(B) $n=300$, $g(\bff)=\bff$\\
\includegraphics[scale=0.4]{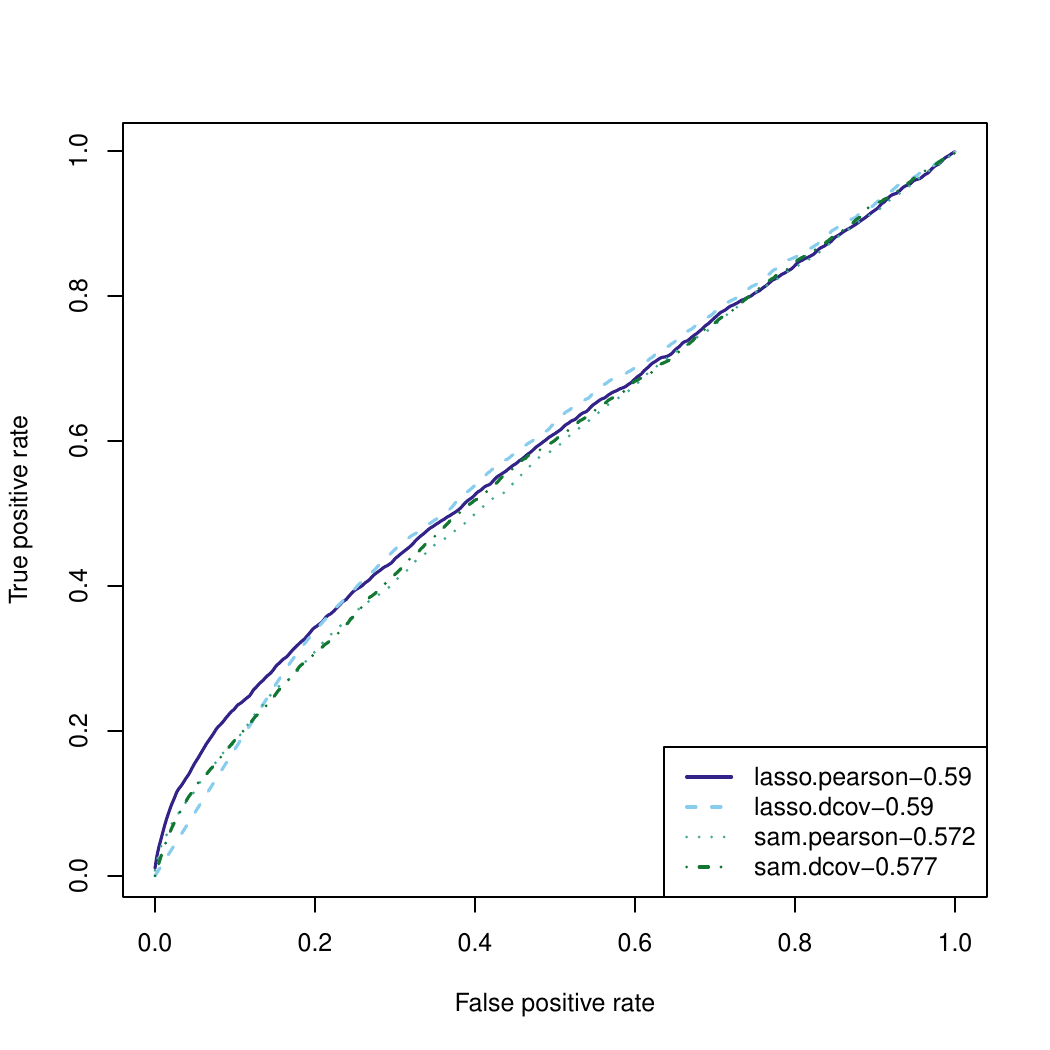}&\includegraphics[scale=0.4]{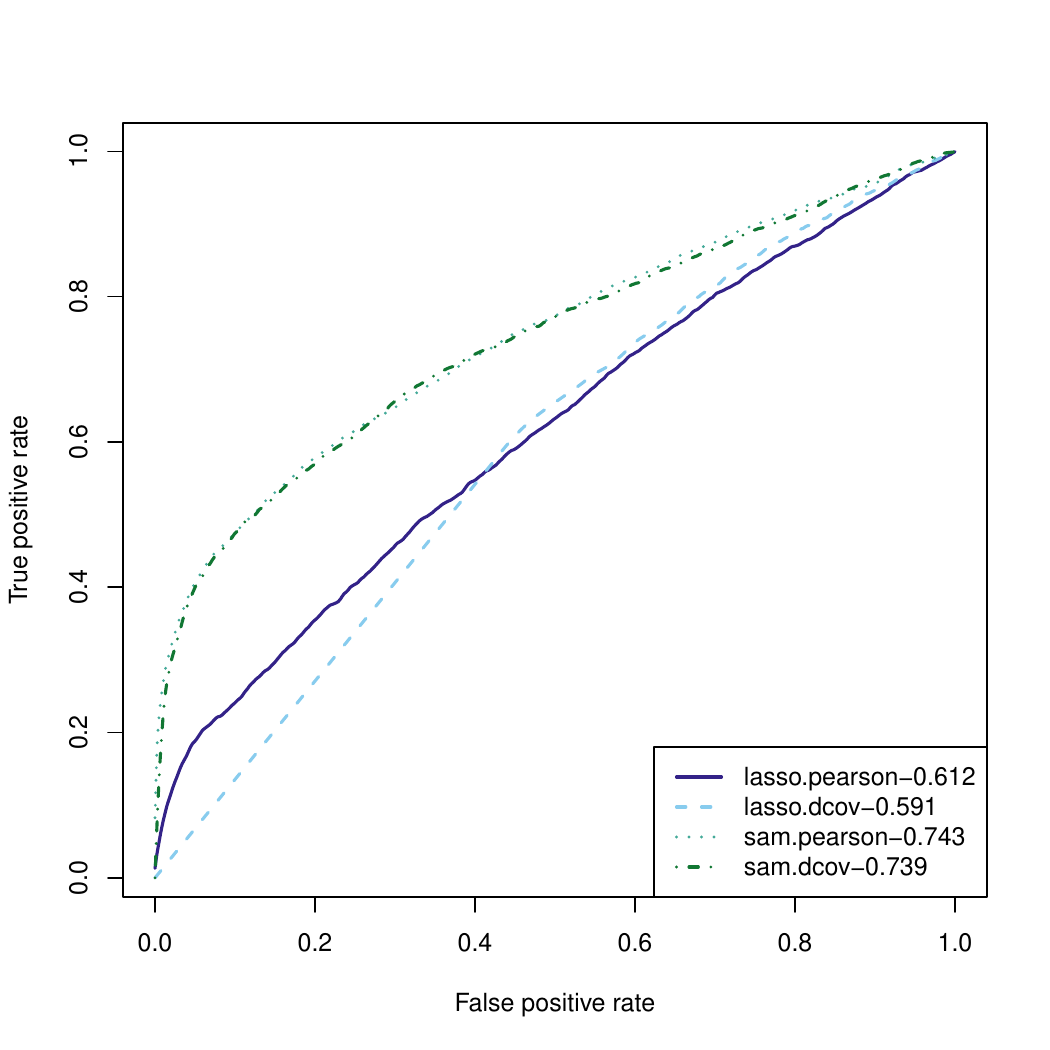}\\
	(C) $n=100$, $g(\bff)=\bff^2$ &(D) $n=300$, $g(\bff)=\bff^2$\\

\end{tabular}

\end{figure}

\begin{example}\label{exmp::smu-factor}[A general graphical model with external factors]
We consider a general conditional dependency graph with the contribution of external factors by combining the ingredients of Examples \ref{exmp::smu} and \ref{exmp::factor-graph}. In particular, we generate $\bu=(\bx_1^T, \bx_2^T)$ with $\bx_1$ and $\bx_2$ generated from Example \ref{exmp::smu} and $\bff \sim \mathcal{N}({\bf 0}, \bI_{300})$, then set $\bx = \bu + \bQ g(\bff)$ where $\bQ$ is the same as Example \ref{exmp::factor-graph}. We also consider $g(\bff)=\bff$ and $g(\bff) =\bff^2$.
\end{example}
In this example, we would like to investigate the performance of a two-step projection method. In particular, we first project $\bx$ onto the space spanned by $\bff$ and denote  the residual by $\hat\bu$. Then we explore the conditional dependency structure of $\hat\bu^{(i)}$ and $\hat\bu^{(j)}$ given $\hat\bu^{(-i, -j)}$ by projecting them onto the space orthogonal to the space (linearly or additively) spanned  by $\hat\bu^{(-i, -j)}$. Here, we compare the performances of  methods using the external factor and those that ignore them. The average ROC curves are rendered in Figure \ref{fig:: smu-factor}.

From Figure \ref{fig:: smu-factor}, we  see that first of all, when $g(\bff)=\bff$,  the methods using  external factors outperform their counterparts without using the information with the best method being lasso.dcov.  Second, when we have nonlinear factors, using the factors do not necessarily help when we only consider  linear projection. For example,  the performances of lasso.pearson and lasso.pearson.f in panel (c) illustrates this point. On the other hand, by using sparse additive model based projection, we have a substantial gain over all the remaining methods especially for $n=300$.

\begin{figure}[tb!]
\caption{ROC curves for a general graphical model with external factors (AUCs in legends). \label{fig:: smu-factor}}
\begin{tabular}{cc}
	\includegraphics[scale=0.4]{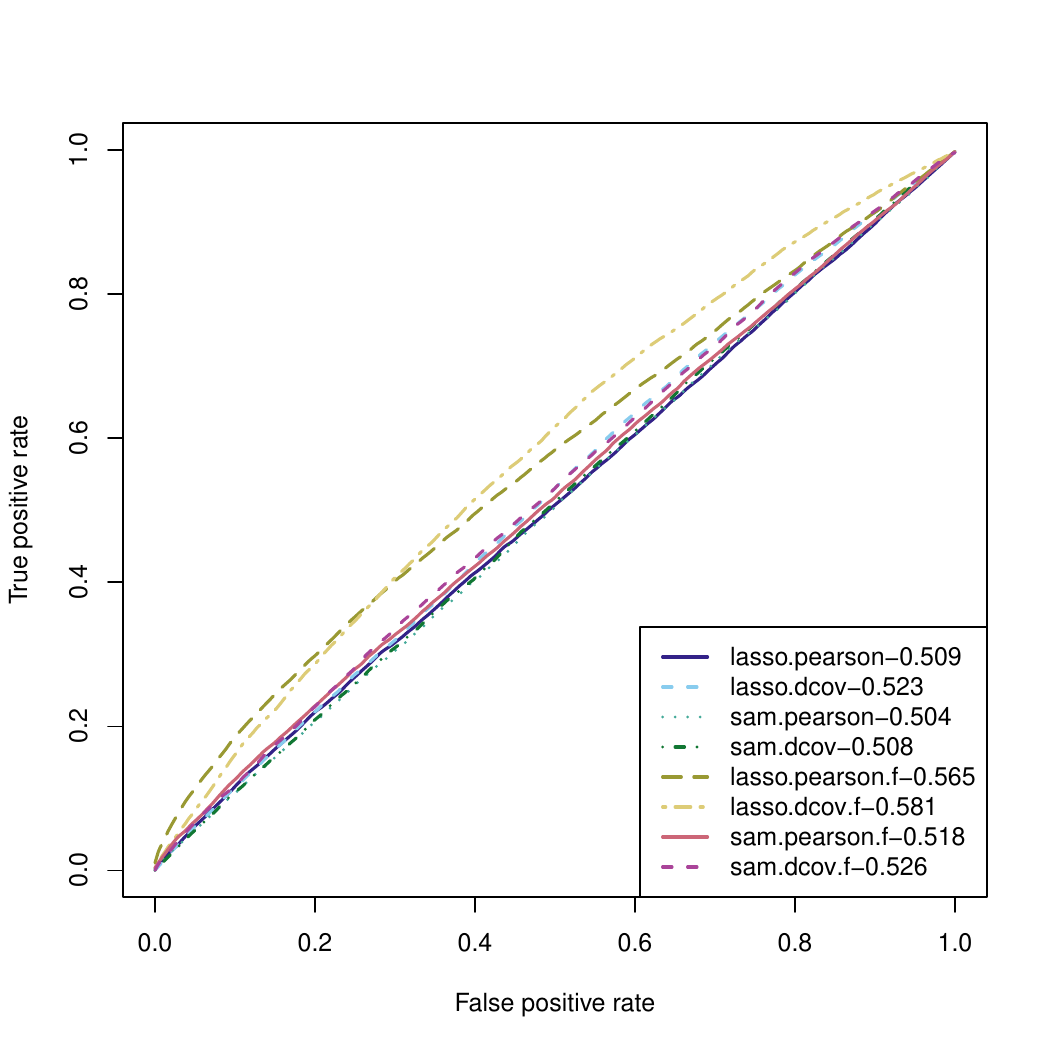}&\includegraphics[scale=0.4]{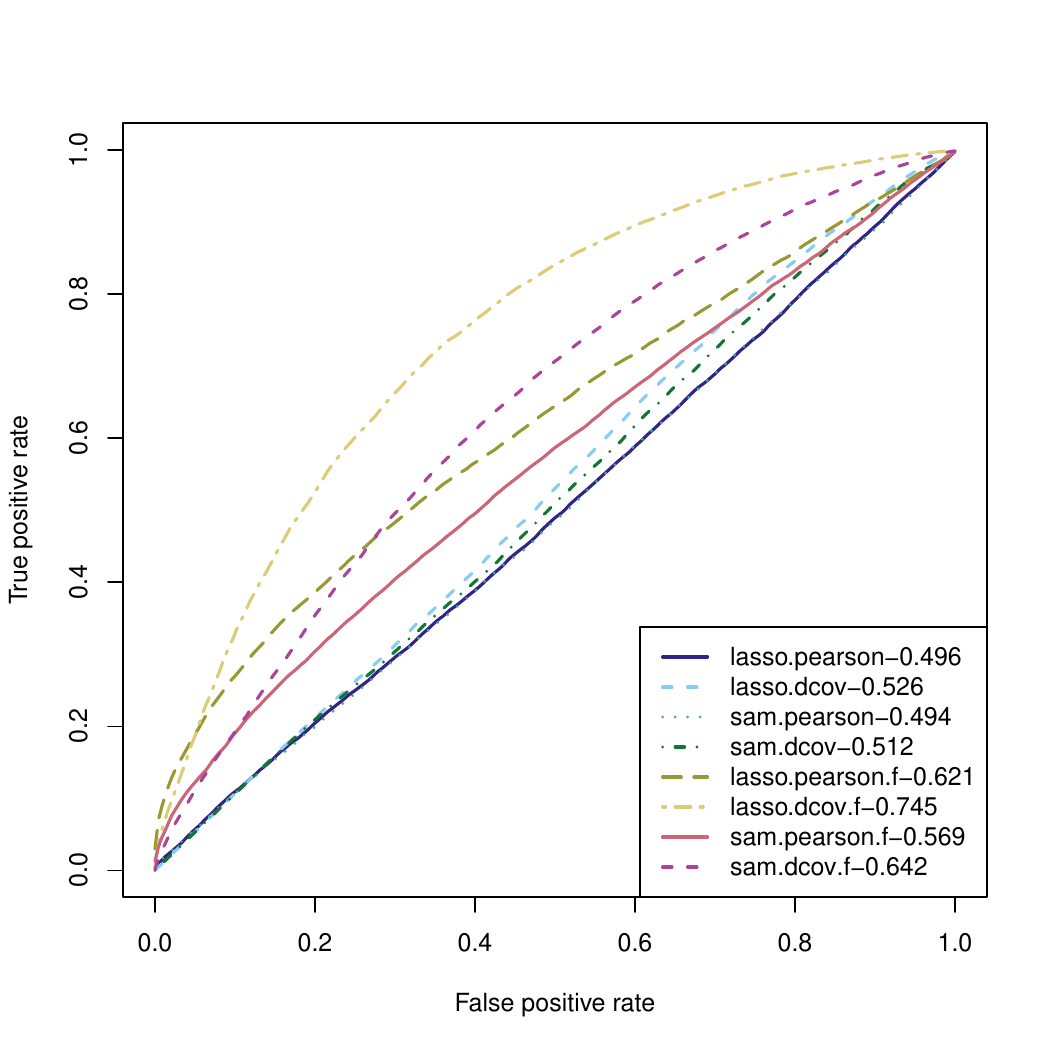}\\
	(A) $n=100$, $g(\bff)=\bff$ &(B) $n=300$, $g(\bff)=\bff$\\
\includegraphics[scale=0.4]{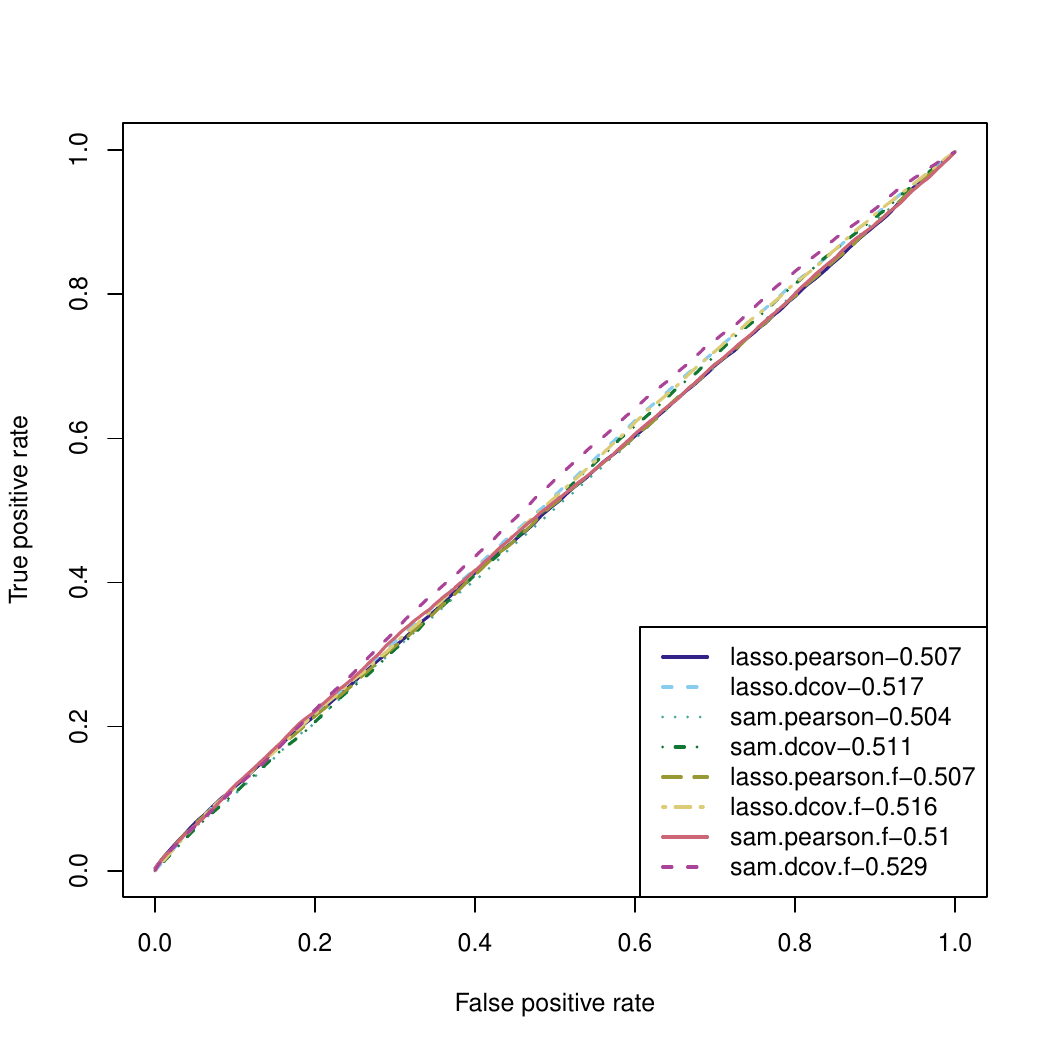}&\includegraphics[scale=0.4]{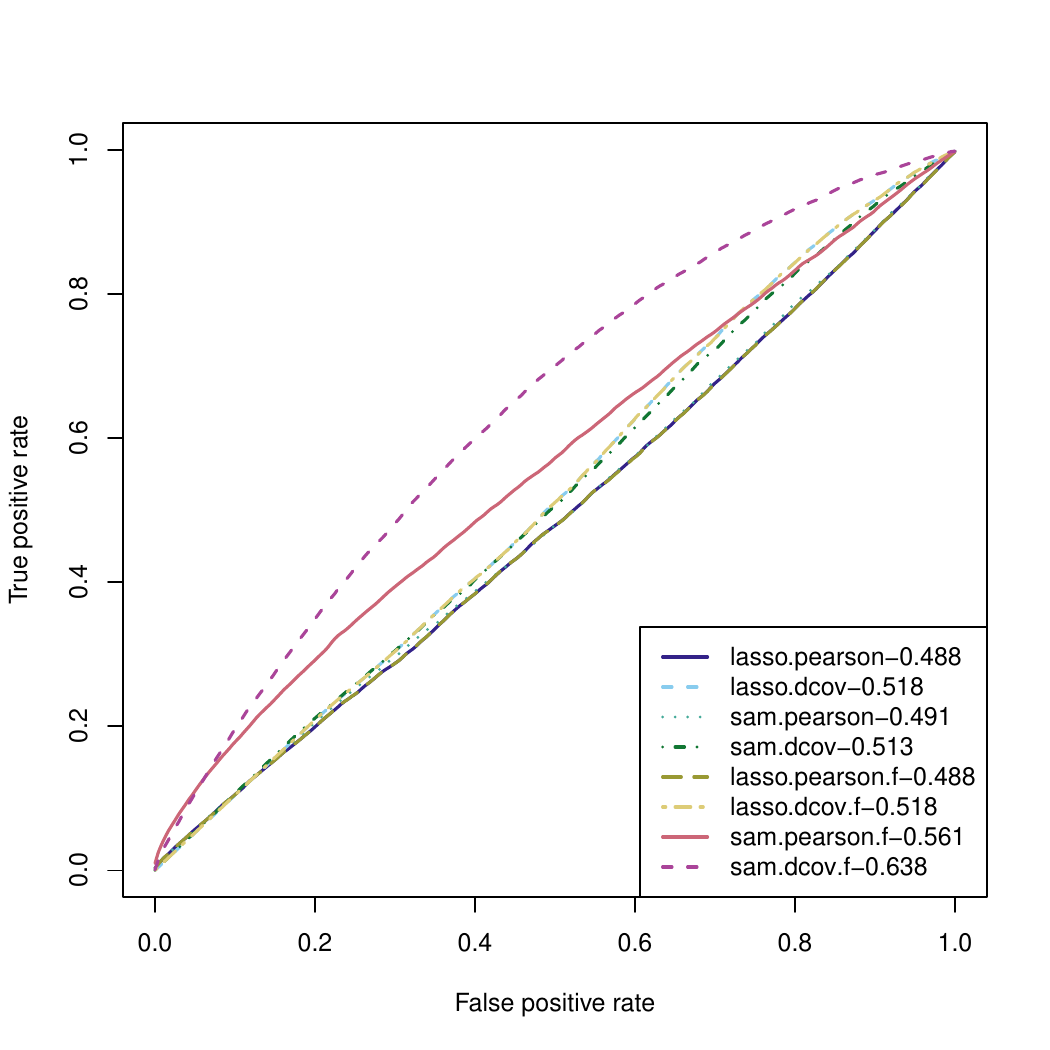}\\
	(C) $n=100$, $g(\bff)=\bff^2$ &(D) $n=300$, $g(\bff)=\bff^2$\\

\end{tabular}

\end{figure}

\section{Real Data Analysis}\label{sec::Realdata}

We collect daily excess returns of 90 stocks among the S\&P 100 index, which are available between August 19, 2004 and August 19, 2005. We chose the starting date as Google's Initial Public Offering date, and consider one year of daily excess returns since then. In particular, we consider the following Fama-French three-factor model \citep{fama1993common}
\[
r_{it}-r_{ft}=\beta_{i,\text{MKT}}(\text{MKT}_t-r_{ft})+\beta_{i,\text{SMB}}\text{SMB}_{t}+\beta_{i,\text{HML}}\text{HML}_t+u_{it},
\]
for $i=1,\dots,90$ and $t=1,\dots,252$. At time $t$, $r_{it}$ represents the return for stock $i$, $r_{ft}$ is the risk-free return rate, and MKT$_t$, SMB$_t$ and HML$_t$ constitute market, size and value factors, respectively.
\subsection{Individual stocks}\label{ind_stock}
In the first experiment, we perform P-DCov test with FDR control on all pairs of stocks and study the dependence between stocks conditional on the  Fama-French three-factors. Under significance level $\alpha=0.01$, we found out that 15.46\% of the pairs of stocks are conditionally dependent given the three factors, which implies that the three factors may not be sufficient to explain the dependencies among stocks. As a comparison, we also implemented the conditional independence test with the distance covariance based test replaced by Pearson correlation based  test. It turns out the 9.34\% of the pairs are significant under the same significance level. This shows the P-DCov test is more powerful than the Pearson correlation test in discovering significant pairs that are conditionally dependent.

We then investigate the top 5 pairs of stocks that correspond to the largest test statistic values using the P-DCov test. They are (BHI, SLB), (CVX, XOM), (HAL,  SLB), (COP, CVX), and (BHI, HAL). Interestingly, all six stocks involved are closely related to the oil industry. This reveals the high level of dependence among oil industry stocks that cannot be well explained by the Fama-French three-factor model. In addition, we examine the stock pairs that are conditionally dependent under the P-DCov test but not under the Pearson correlation test.  The two most significant pairs are (C, USB) and (MRK, PFE). The first pair is in the financial industry (Citigroup and U.S. Bancorp) and the second pair is pharmaceutical companies (Merck \& Co. and Pfizer). This shows that by using the proposed P-DCov, some interesting conditional dependency structures could be recovered.  This is consistent with the findings that the within-sector correlations are still present even after adjusting for Fama-French factors and 10 industrial factors \citep{fan2016incorporating}.

\subsection{Stock groups by industry}

One advantage of our proposed procedure is that P-DCov can investigate dependence between two multivariate vectors, not necessarily of the same dimension, conditional on external factors. As an illustration, beyond studying the relationship of stocks within industrial sectors as in Section \ref{ind_stock}, we explore dependency structures between industrial sectors conditional on the Fama-French three-factors. In particular, we group the stocks in S\&P 100 into 32 industrial groups based on the ``Sectoring by industry groups" information provided on \url{https://www.nasdaq.com}. Each of the industrial group now contains a few stocks,  with a full list provided in Table \ref{tb::groups-stocks} in Appendix.
\begin{table}[ht]
\caption{Pairs of stock groups with the smallest P-value 1e-6.\label{tb::groups-smallest-pvalue}}
\begin{center}
\begin{tabular}{lr}
\hline
Conglomerates&Aerospace\\
Large Cap Pharma&Medical Products\\
Soap and Cleaning Products&Large Cap Pharma\\
Conglomerates&Transportation\\
Banks&Finance\\
Banks&Medical Products\\
Conglomerates&Banks\\
Banks&Utility\\
Soap and Cleaning Products&Banks\\
Wireless National&Banks\\
\hline
\end{tabular}
\end{center}
\end{table}
We perform P-DCov test on all pairs of industrial groups conditional on the same Fama-French three-factors in Section \ref{ind_stock}. Table \ref{tb::groups-smallest-pvalue} presents the pairs of industrial groups (containing more than 2 stocks) which attain the smallest P-value of 1e-6, and for readers' convenience, we list the stocks corresponding to each selected groups in  Table \ref{tb::groups-stocks-smallest-pvalue}. A few interesting findings are the following. Industry `Conglomerates' (containing stocks of General Electric, Honeywell, 3M and United Technologies Corporation), is conditionally dependent of both `Aerospace' (containing stocks of Boeing, General Dynamics and Raytheon) and `Transportation' (containing stocks of FedEx, Norfolk Southern and UPS-United Parcel Service). A plausible explanation is that the companies in sector `Conglomerates' may produce supplies such as components/gadgets for sector `Aerospace' and `Transportation' and therefore the returns of these industrial sectors might be dependent. Similarly, `Large Cap Pharma' (containing stocks of Bristol-Myers Squibb, Johnson \& Johnson, Merck \& Co and Pfizer) is conditionally dependent of `Medical Products' (containing stocks of Abbott Laboratories, Baxter International and Medtronic) and `Soap and Cleaning Products' (containing stocks of Colgate-Palmolive and Procter \& Gamble). The first relationship can be explained as Pharmaceutical versus Health care and the second is due to the fact that companies in `Soap and Cleaning Products' are big suppliers of the Pharmaceutical companies in terms of their commonly used commodities. Lastly, based on the industrial division provided by Nasdaq, sector `Finance' contains mainly investment banks while sector `Banks' contains the usual regional and commercial banks. It is reasonable to believe these two sectors are closely dependent. The rest of the pairs are detected as significant although we cannot provide an obvious explanation. Nevertheless, since the Fama-French three-factors are conditioned out, the discovered conditional dependencies can be subtle. We will leave them to experts for further investigation.

\begin{table}[ht]
\caption{The stocks corresponding to each selected  industry.\label{tb::groups-stocks-smallest-pvalue}}
\begin{center}
\begin{tabular}{lr}
\hline
Banks&BAC C JPM RF USB WFC\\
Large Cap Pharma&BMY JNJ MRK PFE\\
Soap and Cleaning Products&CL PG\\
Conglomerates&GE HON MMM UTX\\
Wireless National&S T VZ\\
Medical Products&ABT BAX MDT\\
Utility&AEP AES ETR EXC SO\\
Finance&AXP COF GS MS\\
Aerospace&BA GD RTN\\
Transportation&FDX NSC UPS\\
\hline
\end{tabular}
\end{center}
\end{table}

After looking at the interesting pairs corresponding to the smallest P-values, we apply FDR control with $\alpha=0.01$ and selected 27 important pairs with results presented in Tables \ref{tb::groups-fdr} and  \ref{tb::groups-stocks}. Similar messages can be discovered and we leave out the detailed discussions due to the large number of pairs.

\section{Discussion}\label{sec::Disc}
In this work, we proposed a general framework for testing conditional independence via projection and showed a new way to create dependency graphs. The current theoretical results assume that contribution of factors is sparse linear. How to extend the theory to the case of sparse additive model projection would be an interesting future work. 
Another interesting future work is to extend the methodology and theory to the case where the dimensions of $\bx$ and $\by$ grow with $n$.  

An R package \pkg{pgraph} for implementing the proposed methodology is available on CRAN.
\section*{Appendix}\label{sec::Proofs}

\begin{lemma}\label{lemma:expectation-U}
Under Condition \ref{Assump:dn}, we have $\max_{i,j}\|(\bB_x-\hbB_x)(\bff_i-\bff_j)\| = O_p(a_n)$ and $\mathbb{E}\max_{i,j}\|(\bB_x-\hbB_x)(\bff_i-\bff_j)\| = O(a_n)$.
\end{lemma}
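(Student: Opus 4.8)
The plan is to reduce both claims to the high-probability bound on $\|(\bB_x-\hbB_x)\bF\|_{2,\infty}$ supplied by Condition \ref{Assump:dn}. First I would observe that, by the definition of the $(2,\infty)$ norm, the columns of the matrix $(\bB_x-\hbB_x)\bF$ are exactly the vectors $(\bB_x-\hbB_x)\bff_i$, so that
\[
\|(\bB_x-\hbB_x)\bF\|_{2,\infty} = \max_{1\le i\le n}\|(\bB_x-\hbB_x)\bff_i\|.
\]
A single application of the triangle inequality then gives, for every pair $(i,j)$,
\[
\|(\bB_x-\hbB_x)(\bff_i-\bff_j)\| \le \|(\bB_x-\hbB_x)\bff_i\| + \|(\bB_x-\hbB_x)\bff_j\| \le 2\|(\bB_x-\hbB_x)\bF\|_{2,\infty},
\]
so that $W_n := \max_{i,j}\|(\bB_x-\hbB_x)(\bff_i-\bff_j)\| \le 2 V_n$ with $V_n := \|(\bB_x-\hbB_x)\bF\|_{2,\infty}$. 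This single inequality does essentially all the work; the remaining task is purely probabilistic bookkeeping.

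For the stochastic-order claim, Condition \ref{Assump:dn} states that $P(V_n > C_2 a_n)\le C_1^{-C_2}$ for every $C_2>1$. Given $\epsilon>0$, I would choose $C_2$ large enough that $C_1^{-C_2}<\epsilon$ (possible since $C_1>1$), whence $P(W_n > 2C_2 a_n)\le P(V_n>C_2 a_n)\le\epsilon$ uniformly in $n$; this is exactly the statement $W_n = O_p(a_n)$.

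For the expectation claim I would write $\mathbb{E}[V_n]=\int_0^\infty P(V_n>t)\,dt$ and substitute $t=C_2 a_n$ to obtain $\mathbb{E}[V_n]=a_n\int_0^\infty P(V_n>C_2 a_n)\,dC_2$. Splitting the integral at $C_2=1$, bounding the probability by $1$ on $[0,1]$ and by $C_1^{-C_2}$ on $(1,\infty)$, yields
\[
\mathbb{E}[V_n]\le a_n\Big(1+\int_1^\infty C_1^{-C_2}\,dC_2\Big) = a_n\Big(1+\frac{1}{C_1\log C_1}\Big) = O(a_n),
\]
since $C_1>1$ is a fixed constant. Combining with $W_n\le 2V_n$ gives $\mathbb{E}[W_n]\le 2\mathbb{E}[V_n]=O(a_n)$. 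The only point requiring any care is the integrability in this last step: the exponential tail $C_1^{-C_2}=e^{-C_2\log C_1}$ guaranteed by Condition \ref{Assump:dn} is precisely what makes the tail integral finite and proportional to $a_n$, so no additional moment assumptions on $\bff$ are needed. I expect this tail-integration to be the only mildly delicate step, and it becomes routine once the triangle-inequality reduction is in place.
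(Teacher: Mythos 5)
Your proof is correct and follows essentially the same route as the paper: reduce to the tail bound of Condition \ref{Assump:dn}, then compute $\mathbb{E}[\,\cdot\,]=\int_0^\infty \mathbb{P}(\cdot>t)\,dt$, split the integral at $1$, and use the exponential tail $C_1^{-C_2}$ to make the remaining integral finite. If anything, you are slightly more careful than the paper, which treats the triangle-inequality step $\max_{i,j}\|(\bB_x-\hbB_x)(\bff_i-\bff_j)\|\le 2\|(\bB_x-\hbB_x)\bF\|_{2,\infty}$ as obvious and absorbs the factor of $2$ implicitly, whereas you track it explicitly throughout.
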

\begin{proof}
From Condition \ref{Assump:dn}, it is obvious that $\max_{i,j}\|(\bB_x-\hbB_x)(\bff_i-\bff_j)\|=O_p(a_n)$. Let $U_n=\max_{i,j}\|(\bB_x-\hbB_x)(\bff_i-\bff_j)\|$ and 	$\tilde{U}_n= U_n/{a_n}$. Then, we have
\begin{align*}
\mathbb{E}(\tilde{U}_n)&=\int_{0}^{\infty} \mathbb{P}(\tilde{U}_n>t) dt\cr
&=\int_{0}^{1} \mathbb{P}(\tilde{U}_n>t) dt +\int_{1}^{\infty}\mathbb{P}(\tilde{U}_n>t) dt\cr
&\leq 1+\int_{1}^{\infty} C_1^{-t} dt <\infty.
\end{align*}
As a result, the lemma is proved.
\end{proof}

For the remaining proofs, we apply Taylor expansion to $\|\hbeps_{i,x}-\hbeps_{j,x}\|$ at $\beps_{i,x}-\beps_{j,x}$ and get
\begin{align}
\|\hbeps_{i,x}-\hbeps_{j,x}\|&=\|\beps_{i,x}-\beps_{j,x}\|+\frac{\bc^{\top}_{i,j,x}}{\|\bc_{i,j,x}\|} (\bB_x-\hbB_x)(\bff_i-\bff_j)=\|\beps_{i,x}-\beps_{j,x}\|+D_{i,j,x},\cr
\|\hbeps_{i,y}-\hbeps_{j,y}\|&=\|\beps_{i,y}-\beps_{j,y}\|+\frac{\bc^{\top}_{i,j,y}}{\|\bc_{i,j,y}\|} (\bB_y-\hbB_y)(\bff_i-\bff_j)=\|\beps_{i,y}-\beps_{j,y}\|+D_{i,j,y}, \label{eq::taylor-expan-eps}
\end{align}
where $\bc_{i,j,x}=\lambda_{i,j,x}(\hbeps_{i,x}-\hbeps_{j,x})+(1-\lambda_{i,j,x})(\beps_{i,x}-\beps_{j,x})$ and $\bc_{i,j,y}=\lambda_{i,j,y}(\hbeps_{i,y}-\hbeps_{j,y})+(1-\lambda_{i,j,y})(\beps_{i,y}-\beps_{j,y})$, for $\lambda_{i,j,x}\in[0,1]$ and $\lambda_{i,j,y}\in[0,1]$.

\begin{proof}[of Theorem \ref{thm::consistency}] 
Using the Taylor expansion in \eqref{eq::taylor-expan-eps}, we have the following decomposition
\begin{align*}
\mathcal{V}^2_n(\hbeps_{x},\hbeps_{y})-\mathcal{V}^2_n(\beps_{x},\beps_{y})=T_1+T_2+T_3,
\end{align*}
where
{\footnotesize
\begin{align}
T_1&=\frac{1}{n^2}\sum_{i,j=1}^n D_{i,j,x} \|\beps_{i,y}-\beps_{j,y}\|+\frac{1}{n^2}\sum_{i,j=1}^n D_{i,j,x} \frac{1}{n^2}\sum_{k,l=1}^n \|\beps_{k,y}-\beps_{l,y}\|-\frac{2}{n^3}\sum_{i=1}^n\sum_{j,k=1}^n D_{i,j,x}\|\beps_{i,y}-\beps_{k,y}\|,\label{eq::T1}\\
T_2&=\frac{1}{n^2}\sum_{i,j=1}^n D_{i,j,y} \|\beps_{i,x}-\beps_{j,x}\|+\frac{1}{n^2}\sum_{i,j=1}^n D_{i,j,y}\frac{1}{n^2}\sum_{k,l=1}^n \|\beps_{k,x}-\beps_{l,x}\|-\frac{2}{n^3}\sum_{i=1}^n\sum_{j,k=1}^n D_{i,j,y}\|\beps_{i,x}-\beps_{k,x}\|,\label{eq::T2}\\
T_3&=\frac{1}{n^2}\sum_{i,j=1}^n D_{i,j,x} D_{i,j,y}+\frac{1}{n^2}\sum_{i,j=1}^n D_{i,j,x} \frac{1}{n^2}\sum_{i,j=1}^n D_{i,j,y}-\frac{2}{n^3}\sum_{i=1}^n \sum_{j,k=1}^n D_{i,j,x} D_{i,k,y}.\label{eq::T3}
\end{align}
}
By  Condition \ref{Assump:dn},  we have $\max_{i,j} |D_{i,j,x}| \le 2\|(\bB_x-\hat\bB_x)\bF\|_{2,\infty}\le  O_p(a_n)$. Therefore,
\begin{align*}
|T_1| &= O_p(a_n)\left(\frac{4}{n^2} \sum_{i,j =1}^n \|\beps_{i,y}-\beps_{j,y}\|\right),\cr
|T_2| &= O_p(a_n)\left(\frac{4}{n^2} \sum_{i,j =1}^n \|\beps_{i,x}-\beps_{j,x}\|\right),\cr
|T_3| &=  O_p(a^2_n).
\end{align*}

Another fact we easily observe is that: $n^{-2}\sum_{i,j=1}^n \|\beps_{i,x}-\beps_{j,x}\|=O_p(1)$, since $\mathbb{E}\|\beps_{i,x}-\beps_{j,x}\|$ is uniformly bounded over all $(i,j)$ pairs and so is $\mathbb{E}(n^{-2} \sum_{i,j=1}^n \|\beps_{i,x}-\beps_{j,x}\|)$.

As a result, we know $\mathcal{V}^2_n(\hbeps_{x},\hbeps_{y})-\mathcal{V}^2_n(\beps_{x},\beps_{y})\convinprob 0$. This combined with Lemma \ref{thm::old-consistency} leads to
\[
\mathcal{V}^2_n(\hbeps_{x},\hbeps_{y})\convinprob \mathcal{V}^2(\beps_{x},\beps_{y}).
\]
\end{proof}

\emph{Remark:} The result of Theorem \ref{thm::consistency}  cannot be implied from that of Theorem \ref{thm::null-distribution}, since independence between $\beps_{x}$ and $\beps_{y}$ is not assumed.

\begin{lemma}\label{Lemma::DSepa}
 For the $\bc_{i,j,x}$ and $\bc_{i,j,y}$ defined in \eqref{eq::taylor-expan-eps}, we have the following approximation error bound on the normalized version.
\begin{align} \left\|\frac{\bc_{i,j,x}}{\|\bc_{i,j,x}\|}-\frac{\beps_{i,x}-\beps_{j,x}}{\|\beps_{i,x}-\beps_{j,x}\|}\right\|&\leq\frac{2}{\|\beps_{i,x}-\beps_{j,x}\|} \max_{i,j}\|(\bB_x-\hbB_x)(\bff_i-\bff_j)\|,\label{eq::cijx-error} \\
\left\|\frac{\bc_{i,j,y}}{\|\bc_{i,j,y}\|}-\frac{\beps_{i,y}-\beps_{j,y}}{\|\beps_{i,y}-\beps_{j,y}\|}\right\|&\leq\frac{2}{\|\beps_{i,y}-\beps_{j,y}\|} \max_{i,j}\|(\bB_y-\hbB_y)(\bff_i-\bff_j)\|.\label{eq::cijy-error}
\end{align}
\end{lemma}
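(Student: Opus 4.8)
The plan is to reduce both bounds to a single elementary inequality comparing the unit vectors (directions) of two nearby vectors, and then to control the relevant perturbation directly from the definition of $\bc_{i,j,x}$. I present the argument for the $x$-component; the $y$-component is identical after replacing $x$ by $y$ throughout. Throughout I work on the (probability-one) event that $\beps_{i,x}-\beps_{j,x}$ and $\bc_{i,j,x}$ are nonzero, which is what makes the normalizations and the Taylor expansion \eqref{eq::taylor-expan-eps} meaningful.

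First I would rewrite $\bc_{i,j,x}$ in terms of the population error difference and the estimation error. Recall from Step 2 that $\hbeps_{i,x}-\hbeps_{j,x}=(\beps_{i,x}-\beps_{j,x})+(\bB_x-\hbB_x)(\bff_i-\bff_j)$. Substituting this into $\bc_{i,j,x}=\lambda_{i,j,x}(\hbeps_{i,x}-\hbeps_{j,x})+(1-\lambda_{i,j,x})(\beps_{i,x}-\beps_{j,x})$ and simplifying gives
\begin{align*}
\bc_{i,j,x}=(\beps_{i,x}-\beps_{j,x})+\lambda_{i,j,x}(\bB_x-\hbB_x)(\bff_i-\bff_j),
\end{align*}
so, since $\lambda_{i,j,x}\in[0,1]$,
\begin{align*}
\|\bc_{i,j,x}-(\beps_{i,x}-\beps_{j,x})\|\leq\|(\bB_x-\hbB_x)(\bff_i-\bff_j)\|\leq\max_{i,j}\|(\bB_x-\hbB_x)(\bff_i-\bff_j)\|.
\end{align*}

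Next I would invoke the elementary bound on the distance between the normalizations of two vectors: for any nonzero $\bu,\bv$,
\begin{align*}
\left\|\frac{\bu}{\|\bu\|}-\frac{\bv}{\|\bv\|}\right\|\leq\frac{2\|\bu-\bv\|}{\|\bu\|}.
\end{align*}
This is proved by writing the left-hand difference as $(\bu-\bv)/\|\bu\|+\bv(\|\bv\|-\|\bu\|)/(\|\bu\|\|\bv\|)$, taking norms, and using the reverse triangle inequality $|\|\bv\|-\|\bu\||\leq\|\bu-\bv\|$. The only place that requires care is the \emph{choice of slot}: I would apply this with $\bu=\beps_{i,x}-\beps_{j,x}$ and $\bv=\bc_{i,j,x}$, so that the population norm $\|\beps_{i,x}-\beps_{j,x}\|$ lands in the denominator --- precisely the normalizing factor appearing on the right-hand side of \eqref{eq::cijx-error}. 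Putting $\bc_{i,j,x}$ in the first slot would instead leave $\|\bc_{i,j,x}\|$ in the denominator, giving the wrong bound.

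Combining the two steps yields \eqref{eq::cijx-error}, and the identical computation with $x$ replaced by $y$ gives \eqref{eq::cijy-error}. I do not anticipate a genuine obstacle here: the argument is short and self-contained, and the only subtlety is the bookkeeping that keeps the population difference $\beps_{i,x}-\beps_{j,x}$, rather than the perturbed vector $\bc_{i,j,x}$, as the normalizing denominator.
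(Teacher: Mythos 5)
Your proof is correct, but it takes a shorter route than the paper's. The paper argues in two steps: first, a geometric angle-comparison step showing that, because $\bc_{i,j,x}$ is a convex combination of $\hbeps_{i,x}-\hbeps_{j,x}$ and $\beps_{i,x}-\beps_{j,x}$, the angle it makes with $\beps_{i,x}-\beps_{j,x}$ is no larger than that of $\hbeps_{i,x}-\hbeps_{j,x}$, so the unit-vector discrepancy for $\bc_{i,j,x}$ is dominated by that for $\hbeps_{i,x}-\hbeps_{j,x}$ (via $\|\cdot\|^2 = 2-2\cos\alpha$); second, an algebraic perturbation bound on $\bigl\|\tfrac{\hbeps_{i,x}-\hbeps_{j,x}}{\|\hbeps_{i,x}-\hbeps_{j,x}\|}-\tfrac{\beps_{i,x}-\beps_{j,x}}{\|\beps_{i,x}-\beps_{j,x}\|}\bigr\|$, which is exactly your elementary two-slot inequality with the population difference in the denominator slot. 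You eliminate the first step entirely: writing $\bc_{i,j,x}=(\beps_{i,x}-\beps_{j,x})+\lambda_{i,j,x}(\bB_x-\hbB_x)(\bff_i-\bff_j)$ with $\lambda_{i,j,x}\in[0,1]$ shows directly that $\bc_{i,j,x}$ is a perturbation of the population difference of size at most $\max_{i,j}\|(\bB_x-\hbB_x)(\bff_i-\bff_j)\|$, so the same perturbation inequality applies to $\bc_{i,j,x}$ itself in one shot. Both arguments yield the same constant $2$ and the same normalizing denominator; what yours buys is brevity and a purely algebraic proof (no cosine formula, no angle monotonicity), while the paper's angle step is the more general observation that normalizing any point on the segment between two vectors can only decrease the angular discrepancy, which would be the tool of choice if one did not have the explicit convex-combination formula for $\bc_{i,j,x}$ in hand. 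Your attention to the asymmetry of the two-slot inequality (keeping $\|\beps_{i,x}-\beps_{j,x}\|$, not $\|\bc_{i,j,x}\|$, in the denominator) is precisely the bookkeeping the paper's displayed computation also enforces.
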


\begin{proof}
It suffices to show \eqref{eq::cijx-error}. First, we will show\begin{equation}\label{Lemma::Dsepa:Step1}
\left\|\frac{\bc_{i,j,x}}{\|\bc_{i,j,x}\|}-\frac{\beps_{i,x}-\beps_{j,x}}{\|\beps_{i,x}-\beps_{j,x}\|}\right\|\leq\left\|\frac{\hbeps_{i,x}-\hbeps_{j,x}}{\|\hbeps_{i,x}-\hbeps_{j,x}\|}-\frac{\beps_{i,x}-\beps_{j,x}}{\|\beps_{i,x}-\beps_{j,x}\|}\right\|.
\end{equation}
Denote by $\alpha_1$ and $\alpha_2$ the angle between $\bc_{i,j,x}$ and $\beps_{i,x}-\beps_{j,x}$, and the angle between $\hbeps_{i,x}-\hbeps_{j,x}$ and $\beps_{i,x}-\beps_{j,x}$, respectively. It is easy to see that $0\leq\alpha_1\leq \alpha_2\leq \pi$, and hence $\cos\alpha_1\geq\cos\alpha_2$. By cosine formula,
{\footnotesize
\begin{align*}
\left\|\frac{\bc_{i,j,x}}{\|\bc_{i,j,x}\|}-\frac{\beps_{i,x}-\beps_{j,x}}{\|\beps_{i,x}-\beps_{j,x}\|}\right\|^2=2-2\cos \alpha_1,\mbox{ and }
\left\|\frac{\hbeps_{i,x}-\hbeps_{j,x}}{\|\hbeps_{i,x}-\hbeps_{j,x}\|}-\frac{\beps_{i,x}-\beps_{j,x}}{\|\beps_{i,x}-\beps_{j,x}\|}\right\|^2=2-2\cos\alpha_2.
\end{align*}
}
Therefore, (\ref{Lemma::Dsepa:Step1}) is proved and it remains to show that
\begin{equation}\label{Lemma::Dsepa:Step2}
\left\|\frac{\hbeps_{i,x}-\hbeps_{j,x}}{\|\hbeps_{i,x}-\hbeps_{j,x}\|}-\frac{\beps_{i,x}-\beps_{j,x}}{\|\beps_{i,x}-\beps_{j,x}\|}\right\|\leq \frac{2}{\|\beps_{i,x}-\beps_{j,x}\|} \max_{i,j \in\{1,\dots,n\}}\|(\bB_x-\hbB_x)(\bff_i-\bff_j)\|.
\end{equation}
Left hand side of (\ref{Lemma::Dsepa:Step2}) can be rewritten as
\begin{align*}
&\left\|\frac{\hbeps_{i,x}-\hbeps_{j,x}}{\|\hbeps_{i,x}-\hbeps_{j,x}\|}-\frac{\beps_{i,x}-\beps_{j,x}}{\|\beps_{i,x}-\beps_{j,x}\|}\right\|\cr
=&\left\|\frac{[(\hbeps_{i,x}-\hbeps_{j,x})-(\beps_{i,x}-\beps_{j,x})]\|\hbeps_{i,x}-\hbeps_{j,x}\|-(\|\hbeps_{i,x}-\hbeps_{j,x}\|-\|\beps_{i,x}-\beps_{j,x}\|)(\hbeps_{i,x}-\hbeps_{j,x})}{\|\hbeps_{i,x}-\hbeps_{j,x}\|\|\beps_{i,x}-\beps_{j,x}\|}\right\|\cr
\leq&\frac{1}{\|\beps_{i,x}-\beps_{j,x}\|}(\|(\hbeps_{i,x}-\hbeps_{j,x})-(\beps_{i,x}-\beps_{j,x})\|+|\|\hbeps_{i,x}-\hbeps_{j,x}\|-\|\beps_{i,x}-\beps_{j,x}\||)\cr
\leq&\frac{2}{\|\beps_{i,x}-\beps_{j,x}\|} \max_{i,j \in\{1,\dots,n\}}\|(\bB_x-\hbB_x)(\bff_i-\bff_j)\|.
\end{align*}
Combining (\ref{Lemma::Dsepa:Step1}) and (\ref{Lemma::Dsepa:Step2}), the lemma is proved.
\end{proof}

\begin{lemma}\label{Lemma::SumInv1}
 Under Conditions \ref{Assump:Moments} and \ref{Assump:Tail Condition}, and  the null hypothesis that $\beps_x \independent \beps_y$, for any $\gamma>0$,
\begin{align*}
\frac{1}{ n^\gamma\log n}\left[\frac{1}{n^2}\sum_{i,j=1}^n \frac{1}{\|\beps_{i,x}-\beps_{j,x}\|}\right]\overset{P}{\rightarrow}0,\quad \frac{1}{ n^\gamma\log n}\left[\frac{1}{n^2}\sum_{i,j=1}^n \frac{1}{\|\beps_{i,y}-\beps_{j,y}\|}\right]\overset{P}{\rightarrow}0.
\end{align*}
\end{lemma}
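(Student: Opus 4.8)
The plan is to prove the first convergence; the second follows verbatim by symmetry. Throughout write $R_{ij}=\|\beps_{i,x}-\beps_{j,x}\|$, and note that only the off-diagonal terms matter, so I treat the sum as running over $i\neq j$ (the diagonal terms are ultimately paired with the factor $D_{i,i,x}=0$ in the applications of this lemma). The main obstacle is that $1/R_{ij}$ is \emph{not integrable}: under Condition \ref{Assump:Tail Condition} the density of $R_{ij}$ is merely bounded, not vanishing, near the origin, so $\mathbb{E}[1/R_{ij}]=\int_0^{C_0}t^{-1}h(t)\,dt=\infty$. Hence no direct first-moment or law-of-large-numbers argument is available. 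The resolution I would use is a truncation argument that trades off a high-probability event against the slack provided by the factor $n^\gamma$.

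First I would fix a truncation level $L_n=n^{2+\gamma'}$ for some constant $\gamma'>0$ and split each summand as
\[
\frac{1}{R_{ij}}=\min\Big(\frac{1}{R_{ij}},\,L_n\Big)+\Big(\frac{1}{R_{ij}}-L_n\Big)_+ .
\]
For the truncated part I would bound the expectation by the layer-cake formula: for $n$ large enough that $1/L_n\le C_0$,
\[
\mathbb{E}\Big[\min\big(1/R_{ij},L_n\big)\Big]=\int_0^{L_n}\mathbb{P}(R_{ij}<1/s)\,ds\le \frac{1}{C_0}+\int_{1/C_0}^{L_n}\frac{M}{s}\,ds=O(\log n),
\]
where the bound $\mathbb{P}(R_{ij}<1/s)\le M/s$ for $s\ge 1/C_0$ is exactly Condition \ref{Assump:Tail Condition}. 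Since all pairs $(\beps_{i,x},\beps_{j,x})$ with $i\neq j$ are identically distributed, this bound is uniform in $(i,j)$, giving $\mathbb{E}\big[n^{-2}\sum_{i\neq j}\min(1/R_{ij},L_n)\big]=O(\log n)$, and Markov's inequality then yields $n^{-2}\sum_{i\neq j}\min(1/R_{ij},L_n)=O_P(\log n)$. Dividing by $n^\gamma\log n$ drives this contribution to $0$ in probability at rate $n^{-\gamma}$.

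For the excess part I would show it vanishes on a high-probability event. Let $A_n=\{R_{ij}\ge 1/L_n\ \text{for all } i\neq j\}$. By the union bound together with Condition \ref{Assump:Tail Condition},
\[
\mathbb{P}(A_n^c)\le \sum_{i\neq j}\mathbb{P}(R_{ij}<1/L_n)\le n(n-1)\frac{M}{L_n}\le M\,n^{-\gamma'}\to 0.
\]
On $A_n$ every summand satisfies $1/R_{ij}\le L_n$, so each $(1/R_{ij}-L_n)_+=0$ and the excess part is identically zero. Combining the two pieces, on the event $A_n$ we have $n^{-2}\sum_{i\neq j}1/R_{ij}=O_P(\log n)$, whence
\[
\frac{1}{n^\gamma\log n}\Big[\frac{1}{n^2}\sum_{i\neq j}\frac{1}{R_{ij}}\Big]=O_P(n^{-\gamma})\convinprob 0 .
\]
The identical argument applied to $\|\beps_{i,y}-\beps_{j,y}\|$ gives the second claim. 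I note in passing that the independence of $\beps_x$ and $\beps_y$ under the null is not actually used, since each sum involves only one error vector; only Conditions \ref{Assump:Moments}--\ref{Assump:Tail Condition} enter.
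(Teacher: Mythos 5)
Your proof is correct and follows essentially the same route as the paper's: truncate each summand at a polynomial level $n^{2+\delta}$, show the truncation is exact on a high-probability event via a union bound with the bounded-density condition, and bound the expectation of the truncated sum by $O(\log n)$ so that the factor $n^{\gamma}$ kills it. The only (harmless) differences are cosmetic — you use the layer-cake/tail-probability form of the expectation bound where the paper integrates against the density directly, and you explicitly dispose of the diagonal terms, which the paper leaves implicit.
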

\begin{proof}
We will only show the first result involving $\beps_x$ with the other one follows similarly. For any $\delta >0$, let
\[
R_n=\frac{1}{n^2}\sum_{i,j=1}^n \frac{1}{\|\beps_{i,x}-\beps_{j,x}\|},\ \ \bar{R}_n=\frac{1}{n^2}\sum_{i,j=1}^n \left[\frac{1}{\|\beps_{i,x}-\beps_{j,x}\|}\wedge n^{2+\delta}\right].
\]
Then for $\forall$ $\epsilon>0$,
\begin{equation}\label{Lemma::SumInv1:Step11}
\mathbb{P}[|R_n-\bar{R}_n|>\epsilon]\leq n^2\mathbb{P}[\|\beps_{i,x}-\beps_{j,x}\|<n^{-2-\delta}]\leq C n^2 n^{-2-\delta}=C n^{-\delta},
\end{equation}
due to the Condition \ref{Assump:Tail Condition} that the density function of $\|\beps_{i,x}-\beps_{j,x}\|$ is pointwise bounded. Therefore, $|R_n-\bar{R}_n|\overset{P}{ \rightarrow}0$, which leads to
\begin{equation}\label{Lemma::SumInv1:Step12}
\left|\frac{R_n}{n^\gamma \log n}-\frac{\bar{R}_n}{n^\gamma \log n}\right|\overset{P}{ \rightarrow}0.
\end{equation}
On the other hand,
\begin{align}\label{Lemma::SumInv1:Step2}
&~~~\mathbb{E}[\frac{1}{\log n}\frac{1}{\|\beps_{i,x}-\beps_{j,x}\|}\wedge n^{2+\delta}]\cr&= \frac{1}{\log n}\mathbb{P}(\frac{1}{\|\beps_{i,x}-\beps_{j,x}\|}>n^{2+\delta})n^{2+\delta}+\frac{1}{\log n}\int^{\infty}_{n^{-2-\delta}} \frac{1}{t} h_{\|\beps_{i,x}-\beps_{j,x}\|}(t)dt\cr
&\leq \frac{C}{\log n}+\frac{1}{\log n}\int^{C_0}_{n^{-2-\delta}} \frac{1}{x} h_{\|\beps_{i,x}-\beps_{j,x}\|}(x)dx+\frac{1}{\log n}\int^{\infty}_{C_0} \frac{1}{t} h_{\|\beps_{i,x}-\beps_{j,x}\|}(t)dt\cr
&\leq \frac{C}{\log n}+\frac{C}{\log n}\int^{C_0}_{n^{-2-\delta}}\frac{1}{x}dx+\frac{1}{C_0 \log n }\mathbb{P}(\|\beps_{i,x}-\beps_{j,x}\|>C_0)\cr
&\leq \frac{C}{\log n}+\frac{C}{ \log n}[\log(C_0)+\log(n^{2+\delta})]+\frac{1}{C_0 \log n}\cr
&\leq \frac{C}{\log n}+C'+\frac{1}{C_0 \log n},
\end{align}
where $h_{\|\beps_{i,x}-\beps_{j,x}\|}$ is the density of $\|\beps_{i,x}-\beps_{j,x}\|$. In the above derivation, the first inequality can be easily seen from (\ref{Lemma::SumInv1:Step11}) and the second inequality utilizes Condition \ref{Assump:Tail Condition}.

Therefore, $\bar{R}_n/{\log n}$ is bounded in $L_1$ and since $n^\gamma\rightarrow \infty$, $\bar{R}_n/[n^\gamma \log(n)]$ converges to 0 in $L_1$ and hence in probability, i.e.,
\begin{align}
\frac{\bar{R}_n}{n^\gamma \log(n)}\overset{P}{\rightarrow}0.
\end{align}
This, combined with (\ref{Lemma::SumInv1:Step12}) yields
\begin{align}
\frac{R_n}{n^\gamma\log(n)}\overset{P}{\rightarrow}0.
\end{align} This completes the proof of Lemma \ref{Lemma::SumInv1}.
\end{proof}

To prove Theorem \ref{thm::null-distribution}, we first introduce two propositions.

\begin{proposition}\label{prop::T_1_T_2}
Under Conditions \ref{Assump:Moments} and \ref{Assump:Tail Condition}, and  the null hypothesis that $\beps_x \independent \beps_y$,
\[
T_1=O_p(a_n/n),\ \ \ T_2=O_p(a_n/n)
\]
\end{proposition}
\begin{proof}
From \eqref{eq::T1}, we rewrite $T_1$ as
{\footnotesize
\begin{align*}
T_1&=\frac{1}{n^2}\sum_{i,j=1}^n D_{i,j,x}\left(\|\beps_{i,y}-\beps_{j,y}\|+\frac{1}{n^2}\sum_{k,l=1}^n \|\beps_{k,y}-\beps_{l,y}\| -\frac{1}{n}\sum_{k=1}^n \|\beps_{i,y}-\beps_{k,y}\|-\frac{1}{n}\sum_{k=1}^n \|\beps_{j,y}-\beps_{k,y}\|\right)\cr
&\doteq \frac{1}{n^2}\sum_{i,j=1}^n \ D_{i,j,x} A_{i,j,y},
\end{align*}
}
with $A_{i,j,y}$ self-defined by the equation.

Let us consider term
\begin{equation}
\mathbb{E}(T^2_1)=\frac{1}{n^4} \sum_{i\neq j, k\neq l} \mathbb{E}(D_{i,j,x} D_{k,l,x} A_{i,j,y} A_{k,l,y})=\frac{1}{n^4} \sum_{i\neq j, k\neq l} \mathbb{E}(D_{i,j,x} D_{k,l,x}) \mathbb{E}(A_{i,j,y} A_{k,l,y}).
\end{equation}

We can separate the above quantity into three parts. It is easy to see that $D_{i,j,x}$ are identically distributed with respect to different pairs of $(i,j)$ when $i\neq j$.  Let us define the following three sets of index quadruples:
\begin{itemize}
\item $I_1 = \{(i,j,k,l)| \mbox{there are four distinct values in } \{i,j,k,l\}\}$.
\item $I_2 = \{(i,j,k,l)| i\neq j, k\neq l, \mbox{and there are three distinct values in } \{i,j,k,l\}\}$.
\item $I_3 = \{(i,j,k,l)| i\neq j, k\neq l, \mbox{and there are two distinct values in } \{i,j,k,l\}\}$.
\end{itemize}

Let us suppose $\mathbb{E}(D_{i,j,x}D_{k,l,x})=c_1$, for $(i,j,k,l)\in I_1$; $\mathbb{E}(D_{i,j,x}D_{k,l,x})=c_2$, for $(i,j,k,l)\in I_2$. $\mathbb{E}(D_{i,j,x}D_{k,l,x})=c_3$, for $(i,j,k,l)\in I_3$. By Condition \ref{Assump:dn}, we know $c_1$, $c_2$ and $c_3$ are all of order $O(a^2_n)$. Also, $\mathbb{E}(A_{i,j,y})=O(1)$. Then we have
\begin{equation}
\mathbb{E}(T^2_1)=\mathbb{E}\left(\frac{c_1}{n^4}\sum_{I_1}A_{i,j,y}A_{k,l,y}+\frac{c_2}{n^4}\sum_{I_2}A_{i,j,y}A_{k,l,y}+\frac{c_3}{n^4}\sum_{I_3} A_{i,j,y}A_{k,l,y}\right).
\end{equation}

On the other hand, we observe that $\sum_{j=1}^n A_{i,j,y}=0$ by definition and $A_{i,j,y}=A_{j,i,y}$, so we have
\[
\sum_{I_2} A_{i,j,y}A_{k,l,y}=\sum_{i=1}^n (\sum_{j=1}^n A_{i,j,y})^2-\sum_{i=1}^n\sum_{j=1}^n A^2_{i,j,y}=-\sum_{i=1}^n\sum_{j=1}^n A^2_{i,j,y}.
\]
By Condition \ref{Assump:Moments}, we know all the second order terms of distances of differences ($\|\beps_{i,y}-\beps_{j,y}\|^2$, $\|\beps_{i,y}-\beps_{j,y}\|\cdot \|\beps_{i,y}-\beps_{k,y}\|$ as examples) have bounded expectation, and thus all the second order terms of $A_{i,j,y}$'s also have bounded expectations. Therefore, $\mathbb{E}( n^{-4} \sum_{I_3}  A_{i,j,y}A_{k,l,y})=O(n^{-2})$.
Finally, since $\sum_{i=1}^n \sum_{j=1}^n A_{i,j,y}=0$,
\begin{align*}
\sum_{I_1}A_{i,j,y}  A_{k,l,y}=&(\sum_{i=1}^n \sum_{j=1}^n A_{i,j,y})^2-\sum_{I_2} A_{i,j,y}A_{k,l,y}-\sum_{I_3} A_{i,j,y}A_{k,l,y}-\sum_{i=1}^n A^2_{i,i,y}\cr
=&-\sum_{I_2} A_{i,j,y}A_{k,l,y}-\sum_{I_3} A_{i,j,y}A_{k,l,y}-\sum_{i=1}^n A^2_{i,i,y}.
\end{align*}
This combined with our previous calculations leads to $\mathbb{E}( n^{-4} \sum_{I_1} A_{i,j,y} A_{k,l,y})=O(n^{-2})$. As a result, we have $\mathbb{E}(T^2_1)=O(a^2_n/n^2)$. Together with Chebychev's inequality, we know $T^2_1=O_p(a^2_n/n^2)$ and equivalently, $T_1=O_p(a_n/n)$. Similarly, we could show that $T_2=O_p(a_n/n)$.
\end{proof}

\begin{proposition}\label{prop::T3}
Under Conditions \ref{Assump:Moments}, \ref{Assump:Tail Condition}, \ref{Assump:dn}, and \ref{Assump:l1}, and  the null hypothesis that $\beps_x \independent \beps_y$,
\[
T_3 = O_p\{(n^{-1/2}{a^2_n})\vee (a^3_n (\log n) n^{\gamma}) \vee (n^{-1/2}{a_n e_n \log K})\}.
\]
\end{proposition}
\begin{proof}
Recall that
\begin{align*}
T_3&=\frac{1}{n^2}\sum_{i,j=1}^n D_{i,j,x} D_{i,j,y}+\frac{1}{n^2}\sum_{i,j=1}^n D_{i,j,x} \frac{1}{n^2}\sum_{i,j=1}^n D_{i,j,y}-\frac{2}{n^3}\sum_{i=1}^n \sum_{j,k=1}^n D_{i,j,x} D_{i,k,y}\cr
& \doteq \frac{1}{n^2}\sum_{i,j=1}^n D_{i,j,x} B_{i,j,y},
\end{align*}
with $B_{i,j,y}$ self-defined in the above equation. We can easily see that $\sum_{i=1}^n B_{i,j,y}=0$, for any $j$. Let $B_{\max}=\max_{i,j} |B_{i,j,y}|$, then we define $\tilde{B}_{i,j,y} = B_{i,j,y}/{(2B_{\max})}+0.5$. In this way, we know $\tilde{B}_{i,j,y}\in[0,1]$ and $\sum_{i=1}^n \tilde{B}_{i,j,y} =1/2$ for any $j$. By Condition \ref{Assump:dn}, we know that $B_{\max}=O_p(a_n)$.  Also, since all $\tilde{B}_{i,j,y}$ are non-negative, by Cauchy-Schwartz, we can upper bound $\|\widetilde{\bB}\|_F$ with the case when $\tilde{B}_{i,j,y}$ have the same values across $i$. Thus, $\|\widetilde{\bB}\|_F=O_p(\sqrt{n})$.

Then we can rewrite $T_3$ in the following form:
\begin{align*}
T_3 &= \frac{2B_{\max}}{n^2}\sum_{i,j=1}^n D_{i,j,x} \tilde{B}_{i,j,y}-\frac{B_{\max}}{n^2}\sum_{i,j=1}^n D_{i,j,x}\doteq T_{31}-T_{32}.
\end{align*}

Let us look at $T_{31}$ first. If we denote $\bD$ and $\widetilde{\bB}$ as the matrix of dimension $n\times n$ composed of elements $D_{i,j,x}$ and $\widetilde{B}_{i,j,y}$, we know that
\begin{equation}\label{T31}
|T_{31}|\leq \frac{2B_{\max}}{n^2} \|\bD\|_F \|\widetilde{\bB}\|_F = O_p({a_n}/{n^2}) O_p(a_n n) O_p(\sqrt{n})= O_p(\frac{a^2_n}{\sqrt{n}}).
\end{equation}
Then, let us proceed to term $T_{32}$. Here, we write $D_{i,j,x}$ in another form as a sum of two terms and bound them separately.
\begin{align}\label{T3Sepa}
D_{i,j,x}&=\frac{(\beps_{i,x}-\beps_{j,x})^{\top}}{\|\beps_{i,x}-\beps_{j,x}\|}(\bB_x-\hbB_x)(\bff_i-\bff_j)+\left(\frac{\bc_{i,j,x}}{\|\bc_{i,j,x}\|}-\frac{\beps_{i,x}-\beps_{j,x}}{\|\beps_{i,x}-\beps_{j,x}\|}\right)(\bB_x-\hbB_x)(\bff_i-\bff_j)\cr
&\equiv\frac{(\beps_{i,x}-\beps_{j,x})^{\top}}{\|\beps_{i,x}-\beps_{j,x}\|}(\bB_x-\hbB_x)(\bff_i-\bff_j)+r_{i,j,x}.
\end{align}
As a result, we know
\[
T_{32} = \frac{B_{\max}}{n^2}\sum_{i,j=1}^n r_{i,j,x} + \frac{B_{\max}}{n^2} \sum_{i,j=1}^n  \frac{(\beps_{i,x}-\beps_{j,x})^{\top}}{\|\beps_{i,x}-\beps_{j,x}\|}(\bB_x-\hbB_x)(\bff_i-\bff_j).
\]
By Lemma \ref{Lemma::DSepa}, we know
\begin{equation}\label{eq::dijx}
|r_{i,j,x}|\leq \max_{i, j} \|(\bB_x-\widehat{\bB}_x)(\bff_i-\bff_j)\|^2 \frac{2}{\|\beps_{i,x}-\beps_{j,x}\|},
\end{equation}
where $\max_{i, j} \|(\bB_x-\widehat{\bB}_x)(\bff_i-\bff_j)\|^2=O_p(a^2_n)$.

Together with Lemma \ref{Lemma::SumInv1}, the first term in $T_{32}$ has rate 
\[
n^{-2}B_{\max} \sum_{i,j=1}^n r_{i,j,x} = O_p(a^3_n(\log n)n^{\gamma}).
\]

The second term in $T_{32}$ can be rewritten in terms of trace:
\begin{align}\label{T322}
&~~~\left\|\frac{B_{\max}}{n^2} \sum_{i,j=1}^n  \frac{(\beps_{i,x}-\beps_{j,x})^{\top}}{\|\beps_{i,x}-\beps_{j,x}\|}(\bB_x-\hbB_x)(\bff_i-\bff_j) \right\|\\&= \left|B_{\max}\Tr{\left((\bB_x-\hbB_x)\frac{1}{n^2}\sum_{i,j=1}^n (\bff_i-\bff_j)\frac{(\beps_{i,x}-\beps_{j,x})^{\top}}{\|\beps_{i,x}-\beps_{j,x}\|}\right)}\right|,\cr
& \doteq \left| B_{\max}\Tr\left((\bB_x-\hbB_x) \bW\right)\right|,\cr
& \leq B_{\max} \sum_{l=1}^p\|B_{x,l}-\hB_{x,l}\|_1 \max_{i,j} |W(i,j)|,
\end{align}
where $\bW$ is self-defined and $W(i,j)$ is the element on the $i$-th row and $j$-column of matrix $\bW$. Let us take $(i,j)=(1,1)$ as an example, and look at $W(1,1)=\frac{1}{n^2}\sum_{i,j=1}^n (f_{i,1}-f_{j,1})\frac{\epsilon_{i,x,1}-\epsilon_{j,x,1}}{\|\beps_{i,x}-\beps_{j,x}\|}$. We easily see that $\mathbb{E} W(1,1)=0$, due to facts: $\beps_{i,x}$ and $\beps_{j,x}$ are mutually independent of $\bff$ with any observation indices;
and $\mathbb{E}[(\beps_{i,x}-\beps_{j,x})/\|\beps_{i,x}-\beps_{j,x}\|]=0$. Furthermore,

\[
\mathbb{E}(W(1,1)^2) = \frac{1}{n^4} \sum_{i,j,k,l=1}^n (f_{i,1}-f_{j,1})\frac{\epsilon_{i,x,1}-\epsilon_{j,x,1}}{\|\beps_{i,x}-\beps_{j,x}\|}(f_{k,1}-f_{l,1})\frac{\epsilon_{k,x,1}-\epsilon_{l,x,1}}{\|\beps_{k,x}-\beps_{l,x}\|}.
\]

Similar to the reasoning in Proposition \ref{prop::T_1_T_2}, we have $n^4$ terms in $I_1$. But in this scenario, $\mathbb{E} (f_{i,1}-f_{j,1})\frac{\epsilon_{i,x,1}-\epsilon_{j,x,1}}{\|\beps_{i,x}-\beps_{j,x}\|}(f_{k,1}-f_{l,1})\frac{\epsilon_{k,x,1}-\epsilon_{l,x,1}}{\|\beps_{k,x}-\beps_{l,x}\|} =0$ due to independence, therefore we know
\[
\mathbb{E}(W(1,1)^2) = O(1/n).
\]
As a result, we know $|W(1,1)|= O_p(n^{-1/2})$, and thus $\max_{i,j} |W(i,j)| = O_p( n^{-1/2} \log K)$. Furthermore, we can bound the term in \eqref{T322} with rate $O_p(n^{-1/2}{a_n e_n \log K})$.

Combining $T_{31}$ and $T_{32}$, we know $T_3 = O_p\{(n^{-1/2}{a^2_n})\vee (a^3_n (\log n) n^{\gamma}) \vee (n^{-1/2}{a_n e_n \log K})\}$.
\end{proof}

\begin{proof}[Proof of Theorem \ref{thm::null-distribution}]  Recall the notations we used in the proof of Theorem \ref{thm::consistency},
 \begin{align*}
\mathcal{V}^2_n(\hbeps_{x},\hbeps_{y})-\mathcal{V}^2_n(\beps_{x},\beps_{y})=T_1+T_2+T_3.
\end{align*}
By Propositions \ref{prop::T_1_T_2} and \ref{prop::T3}, Conditions \ref{Assump:dn} and \ref{Assump:l1}, we have for any $\gamma>0$,
\[
n(T_1+T_2+T_3)=O_p(a_n)+O_p\{(a^2_n \sqrt{n})\vee (n^{1+\gamma}(\log n)a^3_n) \vee (a_n e_n \log K \sqrt{n})\}=o_p(1).
\]
Combined with Lemma \ref{thm::old-asymptotic},  the theorem is proved.
\end{proof}

\begin{proof}
[Proof of Corollary \ref{coro::new-teststat}]
	The result follows directly from the proofs of Theorems \ref{thm::consistency} and  \ref{thm::null-distribution} and an application of Slutsky's theorem.	
\end{proof}

\begin{proof}[Proof of Theorem \ref{thm::TestSigLev}] The proof of Theorem \ref{thm::TestSigLev} follows similarly as Theorem 6 in \cite{szekely2007measuring}. Here we omit the details for brevity.		
\end{proof}


\section*{Acknowledgement}
The authors would like to thank the Editor, the AE and anonymous referees for their insightful comments which greatly improved the scope of this paper. This work was partially supported by National Science Foundation grants  NSF grant DMS-1308566, NSF CAREER grant DMS-1554804, DMS-1406266, NSF DMS-1662139 and National Institutes of Health grant R01-GM072611-14.%
\begin{table}[ht]
\caption{The selected important pairs of industry groups  with FDR control under $\alpha=0.01$.\label{tb::groups-fdr}}
\begin{center}
\begin{tabular}{lr}
\hline
Banks&Medical Products\\
Banks&Utility\\
Banks&Medical\\
Banks&Finance\\
Large Cap Pharma&Medical Products\\
Large Cap Pharma&Medical\\
Soap and Cleaning Products&Cosmetics \& Toiletries\\
Soap and Cleaning Products&Banks\\
Soap and Cleaning Products&Large Cap Pharma\\
Conglomerates&Aerospace\\
Conglomerates&Banks\\
Conglomerates&Transportation\\
Retail&Building Prds Retail\\
Wireless National&Banks\\
Building Products&Paper \& Related Products\\
Banks&Insurance\\
Building Products&Conglomerates\\
Conglomerates&Utility\\
Conglomerates&Machinery\\
Paper \& Related Products&Conglomerates\\
Building Products&Transportation\\
Large Cap Pharma&Banks\\
Business Services&Computer\\
Soap and Cleaning Products&Medical Products\\
Semi General&Computer\\
Conglomerates&Medical Products\\
Paper \& Related Products&Metal Products\\
\hline
\end{tabular}
\end{center}
\end{table}

\begin{table}
\caption{The industry groups and their associated stocks\label{tb::groups-stocks}}
\begin{center}

\begin{scriptsize}
\begin{tabular}{ll}
Metal Products & AA\\
Medical & AMGN\\
Steel & ATI\\
Cosmetics \& Toiletries & AVP\\
Medical Products & ABT BAX MDT\\
Utility & AEP AES ETR EXC SO\\
Insurance & AIG ALL CI HIG\\
Finance & AXP COF GS MS\\
Aerospace & BA GD RTN\\
Banks & BAC C JPM RF USB WFC\\
Large Cap Pharma & BMY JNJ MRK PFE\\
Beverages & CCU KO\\
Machinery & CAT\\
Soap and Cleaning Products & CL PG\\
Cable TV & CMCSA\\
Oil & COP CVX HAL SLB WMB XOM\\
Food & CPB\\
Computer & CSCO HPQ IBM MSFT ORCL PEP\\
Media Conglomerates & DIS\\
Auto & F\\
Transportation & FDX NSC UPS\\
Conglomerates & GE HON MMM UTX\\
Internet & GOOG\\
Building Prds Retail & HD\\
Semi General & INTC TXN\\
Paper \& Related Products & IP\\
Retail & MCD TGT WMT\\
Tobacco & MO\\
Industrial Robotics & ROK\\
Wireless National & S T VZ\\
Building Products & WY\\
Business Services & XRX	\\
\end{tabular}
\end{scriptsize}
	
\end{center}
\end{table}

\bibliographystyle{biometrika}
\bibliography{cond-depend}
\end{document}